\documentclass[fleqn,10pt]{wlscirep}                
\usepackage{float}
\usepackage{graphicx}
\usepackage{mathptmx}
\usepackage{multirow}
\usepackage{amsthm}
\newtheorem{procedure}{Procedure}

\newtheorem{protocol}{Protocol}
\newtheorem{example*}{Example}
\newtheorem{remark}{Remark}
\newtheorem{definition}{Definition}
\newtheorem{theorem}{Theorem}
\newtheorem{lemma}{Lemma}
\usepackage[square, numbers, comma, sort&compress]{natbib}
\usepackage{physics}
\usepackage{graphicx}
\usepackage{subcaption}
\captionsetup{compatibility=false}
\usepackage{enumitem}
\usepackage{mathtools}
\usepackage{amssymb}
\newcommand{\undertilde}[1]{\underset{\widetilde{}}{#1}}

\usepackage{siunitx}
\usepackage{pgfplots}

\newtheorem{fact}{Fact}
\usepackage{glossaries}
\newacronym{sqc}{SQC}{Semi-Quantum Communication}
\newacronym{sqdc}{SQDC}{Semi-Quantum Direct Communication}
\newacronym{sbp}{OBP}{One-Bit Protocol}
\newacronym{pns}{PNS}{Photon Number Splitting}
\newacronym{nk}{EKSQPC}{Economic Keyless Semi-Quantum Point-to-Point Communication}
\newacronym{tc}{TF}{Tele-Fetch}
\newacronym{mrad}{MRAD}{MRA Detection}
\newacronym[plural=MRAs, firstplural= Measure and Replay Attacks (MRAs)]{attackmr}{MRA}{the Measure and Replay Attack}
\newacronym[plural=qubits, firstplural=quantum bits (qubits)]{qb}{qubit}{quantum bit}
\newacronym[plural=cbits, firstplural=classical bits (cbits)]{cb}{cbit}{classical bit}
\newacronym[plural=quregisters, firstplural=quantum bit registers (quregisters)]{qbr}{quregister}{quantum bit register}
\newacronym[plural=cregisters, firstplural=classical bit registers (cregisters)]{cbr}{cregister}{classical bit register}
\newacronym{iid}{iid}{independent and identically distributed}
\newacronym{pbacc}{PBACC}{Public Bidirectional Authentic Classical Channel}
\newacronym{pbqc}{PBQC}{Public Bidirectional Quantum Channel}
\newacronym{renk}{REKSQPC}{Rate Estimation EKSQPC}
\newacronym{qkd}{QKD}{Quantum Key Distribution}
\newacronym{sqkd}{SQKD}{Semi-Quantum Key Distribution}
\newacronym[plural=PMs]{pm}{PM}{Positive MRAD}
\newacronym{ept}{EPT}{Entanglement Preservation Time}
\newacronym[plural=r.v., firstplural=Random Variables]{rv}{r.v.}{Random Variable}
\newacronym[plural=MITMs, firstplural=Man-In-The-Middle Attacks (MITMs)]{mitm}{MITM}{Man-In-The-Middle Attack}

\def\iid{\overset{\mbox{\scriptsize iid}}{\sim}}

\title{Keyless Semi-Quantum Point-to-point Communication Protocol with Low Resource Requirements}
\author[1,*]{Haoye Lu}
\author[2]{Michel Barbeau}
\author[1]{Amiya Nayak}
\affil[1]{University of Ottawa, School of Electrical Engineering and Computer Science (EECS), Ottawa, K1N 6N5, Canada}
\affil[2]{Carleton University, School of Computer Science, Ottawa, K1S 5B6, Canada}
\affil[*]{hlu044@uottawa.ca}

\begin{abstract}
Full quantum capability devices can provide secure communications, but they are challenging to make portable given the current technology. Besides, classical portable devices are unable to construct communication channels resistant to quantum computers. Hence, communication security on portable devices cannot be guaranteed. Semi-Quantum Communication (SQC) attempts to break the quandary by lowering the receiver's required quantum capability so that secure communications can be implemented on a portable device. However, all SQC protocols have low qubit efficiency and complex hardware implementations. The protocols involving quantum entanglement require linear \gls*{ept} and linear quregister size. In this paper, we propose two new keyless SQC protocols that address the aforementioned weaknesses. They are named Economic Keyless Semi-Quantum Point-to-point Communication (EKSQPC) and Rate Estimation EKSQPC (REKSQPC). They achieve theoretically constant minimal \gls*{ept} and quregister size, regardless of message length. We show that the new protocols, with low overhead, can detect Measure and Replay Attacks (MRAs). REKSQDC is tolerant to transmission impairments and environmental perturbations. The protocols are based on a new quantum message transmission operation termed Tele-Fetch. Like QKD, their strength depends on physical principles rather than mathematical complexity.
\end{abstract}

\begin{document}

\flushbottom
\maketitle
\section{Introduction}
Two full quantum capability devices can communicate securely with \gls*{qkd} \citep{PhysRevLett.85.441, PhysRevLett.67.661, PhysRevA.65.032302, Lo2050}. In this protocol, two communicants have to be armed with advanced quantum components including quantum registers, programmable quantum circuits and quantum generators. Most of them can only function under stable and well-configured environments and occupy large space. So, it is challenging to implement secure communications on portable devices. On the other hand, quantum computers can efficiently break RSA cryptosystem \citep{doi:10.1137/S0097539795293172}, the security foundation of almost all classical communication protocols. Thence, the communication security of portable devices is in imminent danger of collapse. 

\gls*{sqc} intends to break the predicament by limiting the quantum capability of the receiver without dampening the transmission security. The quantum components for realizing limited quantum capability can be designed compact, simple and robust so that they could be integrated into a portable device. The discussions start from two \gls*{sqkd} protocols reported by Boyer et al. \citep{PhysRevLett.99.140501, PhysRevA.79.032341}. Compared with \gls*{qkd}, the receiver Bob needs only to perform four quantum operations: (1) generate \glspl*{qb} in the Z-basis, (2) measure \glspl*{qb} in the Z-basis, (3) permute \glspl*{qb} and (4) access quantum channel. These two new protocols secure the communications by randomizing measurement basis and Bob's treatment on the \glspl*{qb} he receives. For concealing Bob's behaviour, reordering of the \glspl*{qb} is also required. In 2011, Jian et al. \citep{0256-307X-28-10-100301}  proposed a new \gls*{sqkd} protocol that improves \gls*{qb} efficiency (the message length with respect to the number of  \glspl*{qb} sent by Alice) from the original $12.5\%$ to roughly $50\%$ by using entangled \glspl*{qb}. But the \acrfull*{ept} for implementing the protocol is at least linear to the length of the message.  So is the \gls*{qbr} size. Li et al. \citep{li2016} showed that Bob's quantum computation task can be delegated to a third party quantum server in semi-quantum communications at the cost of a low \gls*{qb} efficiency ($6.25\%$). In 2015, Luo and Hwang \citep{Luo2016} proposed a new protocol showing that the \gls*{pbacc} is unnecessary if the two communicants have a pre-shared key. However, besides a even longer \gls*{ept} and a low \gls*{qb} efficiency ($12.5\%$), a larger \gls*{qbr} size is required for each data bit. A similar pre-shared key based protocol proposed by Almousa and Barbeau \citep{7848870} shows that Bob does not need to store any \glspl*{qb}, but the linear \gls*{ept} persists. Recently, more work concerning \gls*{sqc} is reported \citep{Shukla2017,PhysRevLett.118.220501,Wu2017,Gu2018}.

All the aforementioned protocols \citep{PhysRevLett.99.140501, PhysRevA.79.032341, 0256-307X-28-10-100301, li2016,Luo2016,7848870,Shukla2017} suffer from low \gls*{qb} efficiency. Most of them have significant large linear \gls*{qbr} size overhead and require permutation of \glspl*{qb} \citep{PhysRevLett.99.140501, PhysRevA.79.032341, 0256-307X-28-10-100301,Luo2016,7848870}. Regarding the protocol involving entangled \glspl*{qb} \citep{0256-307X-28-10-100301,Luo2016,7848870}, the quantum \gls*{ept} is at least linear. Although a six-hour record has been achieved by Zhong et al. utilizing europium ion implanted in a crystal \citep{2015Zhong}, entanglement time declines considerably should the entangled photons be propagated in an optical fiber (the most common implementation of quantum communication protocols) \citep{Inagaki:13}. Besides, involving permutations on \glspl*{qb} (not practical shortly) dooms to a low transmission efficiency and reliability. Considering that the unusual materials~(for instance, coupled electron \citep{2010Neumann} and ultracold atoms \citep{2016Dai}) are necessary for the implementation of \glspl*{qbr}, a commercial quantum network based on them is not feasible in a near future.

This paper reports a new \gls*{sqdc} protocol and a rate estimation version, named \gls*{nk} and \gls*{renk}, that address all the aforementioned issues. An innovative operation, called \gls*{tc}, utilizes entangled \gls*{qb} pairs to transmit messages. It is at the core of the \gls*{sbp}. The results of measurements on the pairs fall in a predesigned set of values because of the entanglement, but do not carry any useful information. The design makes the \gls*{sbp} functioning without a pre-shared key and fully resistant to information leakage even if the \glspl*{qb} are intercepted. Besides, the protocol uses the same quantum circuit as the one to detect \gls*{attackmr} (called \gls*{mrad}) and thus, not only saves the quantum resources but also becomes the cornerstone of the \gls*{nk} and \gls*{renk} protocols. Because Alice performs the same quantum procedures in both protocols (\gls*{sbp} or \gls*{mrad}), Bob does not need to communicate with Alice until all quantum procedures (Alice's and Bob's) are completed. Alice and Bob execute \glspl*{mrad} using a small portion of the measurement results before using the \gls*{pbacc} to translate the rest into valid messages. The protocol is proved fully secure under the assumption that \glspl*{attackmr} are always detectable. As the pivot to secure the messages is a successful detection of \glspl*{attackmr}, we show that, with only $15$ probing bits, the attack detection success rate can achieve $0.995$ (under the assumption that the adversary Eve has $0.6$ possibility to attack a \gls*{qb}). The security of the protocol is enhanced considerably if a few more probing bits are added. The \gls*{qb} efficiency asymptotically reaches $100\%$ with the message length. The implementation of the \gls*{nk} protocol has low requirements on quantum resources. In particular, the \gls*{qbr} size required by Alice is as low as one, and the required \gls*{ept} is $C+2T$~(where $C$ is the time that Alice takes to generate, send and receive the \glspl*{qb}; and $T$ is the one-way time for the \glspl*{qb} to travel between Alice and Bob). We prove that both the \gls*{qbr} size and \gls*{ept} reach the theoretical minimums.

Considering that the entanglement of \glspl*{qb} may not always persist during the transmission of \glspl*{qb}, we assume that there is a probability $\omega$ that the entanglement involving a \gls*{qb} is destroyed as it can be disturbed by hardware imperfection and environmental disturbance. Under this assumption, we design a statistical test to compare $\omega$ with the probability that a \gls*{qb} is attacked or disturbed. When a significant difference is observed, Alice concludes that Eve perpetrated attacks and aborts the execution of the protocol. Therefore, the communication is not eavesdropped successfully. Compared with the original \gls*{nk}, more probing bits are required to achieve the same detection success rate; however, the overhead is still low. In particular, our simulation results reveal that $60$ probing bits are enough to detect almost all attacks when $\omega$ is unknown. If the rate is given, then $40$ probing bits are enough to achieve the same detection success rate. 

This paper is a revised and extended version of a preliminary workshop paper \citep{8269077} in which we introduced the original \gls*{nk} protocol. Compared to the workshop paper, this paper articulates the original protocol as well as its analysis with more details. Based on this, we report an upgraded and practical version, \gls*{renk}, with its security, resource requirements and transmission efficiency analysis. Moreover, we also provide a detailed comparison with other typical \gls*{sqkd} and \gls*{sqdc}~protocols.

The rest of the paper is organized as follows. In Section~\ref{secNoteBack}, we review Bell measurement and \gls*{mrad}, which are integrated in the new protocol. In Sections~\ref{secNP} and \ref{secRev}, we introduce our new protocols including a rate estimation version taking into account the probability that a \gls*{qb} is disturbed. We also do a security analysis and discuss simulation results. In Section~\ref{secEvaluation}, we talk about their quantum resource requirements and transmission overhead. Finally, we draw the conclusions in Section~\ref{secConclusion}.

\section{Background}\label{secNoteBack}
The section starts from a brief review of EPR pairs states and Bell measurement on which our new protocol heavily relies. Then we introduce the \glspl*{attackmr} as well as its detection algorithm (\gls*{mrad}) that secures the data transmission of the new protocol.

In this paper, \glspl*{cb} are denoted by lowercase English letters, and a \gls*{cb} sequence is represented by an uppercase English letter over a tilde. For instance, $\undertilde{M}= m_1m_2\cdots m_t$ is a \gls*{cb} string of length $t$. Qubits are denoted by Greek letters and Bell states by Bold English capital letters.

\subsection{EPR pairs and Bell measurement}\label{secBSBM}
A pair of \glspl*{qb} that are together in Bell state is called an EPR pair. Bell states have four types: $\ket{\Phi^+}  = \frac{1}{{\sqrt{2}}}\cdot(\ket{00}+\ket{11})$, $\ket{\Phi^-} = \frac{1}{{\sqrt{2}}}\cdot{(\ket{00}-\ket{11})}$, $\ket{\Psi^+} = \frac{1}{{\sqrt{2}}}\cdot(\ket{01}+\ket{10})$, and $\ket{\Psi^-} = \frac{1}{{\sqrt{2}}}\cdot(\ket{01}-\ket{10})$; we can use the Bell measurement (B.M.) (Figure~\ref{figbm}) to identify them.  The inputs of the circuit are two \glspl*{qb} $\gamma_A$ and $\gamma_B$, and the outputs are two \glspl*{cb} $e_1$ and $e_2$.

\begin{figure}[!h]
\centering
\includegraphics[width=0.3\textwidth]{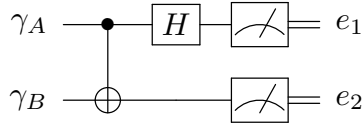}
\caption{The Bell measurement Circuit.}\label{figbm}
\end{figure}
If $\gamma_A\gamma_B$ is an EPR Pair (Bell state), then the outputs are deterministic and listed in Table~\ref{tableBMBS}; otherwise, $\gamma_A\gamma_B$ is mapped into a Bell state stochastically. For $\gamma_A\gamma_B$ equal to $\ket{00}$ , $\ket{01}$, $\ket{10}$ or $\ket{11}$, the distributions of the outputs are listed in Table~\ref{tableBMNBS}.

\subsection{Measure and Replay Attack and the method of detection}\label{secMrad}

A simplified version of \glspl*{mitm} is called replay attack. The attacker Eve deceives the truthful listener(s) by replaying messages outside the expected context so that the listener believes that the protocol has been executed successfully~\citep{Malladi02onpreventing}.

We can perpetrate a similar attack in the context of quantum communications. Assume that  \glspl*{qb} sent or received by Alice and Bob can be intercepted by Eve. As a \gls*{qb} is sent to Bob by Alice through the quantum channel, Eve uses the Z-basis to measure it. If the measurement result is zero, then Eve sends $\ket{0}$ to Bob; otherwise, she sends $\ket{1}$. As the replay of the message follows the measurement, we call the attack  \acrfull*{attackmr} \citep{7848870}.

The following method demonstrates how to utilize the EPR pairs and Bell measurement to detect \glspl*{attackmr}. The Luo and Hwang's protocol \citep{Luo2016} and Almousa and Barbeau's protocol \citep{7848870} also apply  similar ideas for the attack detection. 

\textbf{\acrfull*{mrad}}:
\begin{enumerate}[label=D\arabic*]
\item\label{mradAlicePrep} Alice randomly picks a \gls*{cb} $i=0$ or $1$, based on which she generates an EPR pair $\mathbb{E}$ (if $i = 0$, $\mathbb{E}=\ket{\Phi^+}$; else, $\mathbb{E}=\ket{\Psi^-}$).
\item\label{mradAliceSend} $\mathbb{E}$ consists of two \glspl*{qb}, $\gamma_A$ and $\gamma_B$. Alice keeps $\gamma_A$ and sends $\gamma_B$ (named the probing bit) to Bob. 
\item\label{mradBobRef}	$\gamma_B$ is reflected by Bob to Alice.
\item\label{mradAliceRec} After receiving $\gamma_B'$, Alice applies Bell measurement (Figure~\ref{figbm}) on $\gamma_A\gamma_B'$ to obtain $e_1$ and $e_2$.
\item\label{mradCorrect} The combination of $e_1$ and $e_2$ indicates the EPR pair that the circuit measured. We consider the protocol secure (denoted by zero) if the measured EPR pair agrees with the one Alice produced in Step~\ref{mradAlicePrep}. If not, an \gls*{attackmr} is detected (denoted by one).
\end{enumerate}

\begin{example*}
Suppose Alice and Bob implement \gls*{mrad} for \gls*{attackmr} detection. Without loss of generality, assume Alice picks $i = 1$ and thus produces a corresponding EPR pair $\mathbb{E}=\ket{\Psi^-}=\frac{1}{{\sqrt{2}}}\cdot(\ket{01}-\ket{10})=\gamma_A\gamma_B$. Alice intends to send $\gamma_B$ to Bob; however, Eve intercepts and measures it and gets the measurement result $r = 0$. Simultaneously, $\gamma_A$ retained by Alice collapses to $\ket{1}$ due to the entanglement. Eve produces a new \gls*{qb} ($\ket{0}$) correspondingly and send it to Bob. Bob does nothing but reflects it back to Alice. $\gamma_A$ and the received $\ket{0}$ are paired together and measured by Alice using the Bell measurement circuit. Notice that $\gamma_A$ has collapsed to $\ket{1}$. So the \gls*{qb} pair measured by the circuit is $\gamma_A\gamma_B = \ket{10}$. By Table~\ref{tableBMNBS}, we have $50$ percent possibility to get $e_1e_2 = 01$ (and so deduce that the input is $\ket{\Psi^+}$ by Table~\ref{tableBMBS}, a true positive) and to get $e_1e_2 = 11$ (and thus deduce that the input is $\ket{\Psi^-}$, a false negative). 
\end{example*}

The example shows that if Alice picks $i = 1$ and the measurement result $r$ of Eve is zero, there is $50$ percent possibility for Alice to deduce that the protocol is secure although the attack is perpetrated.  By Table~\ref{tableBMNBS}, we can draw the same conclusion for any choice of $i$ and $r$. Hence, Lemma~\ref{lemmaDectmra} and Theorem~\ref{umdn} follow. 
\begin{lemma}\label{lemmaDectmra}
	Provided that Eve attacks the probing bit, there is $0.5$ possibility for \gls*{mrad} to detect an \gls*{attackmr}. 
\end{lemma}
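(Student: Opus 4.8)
The plan is to establish Lemma~\ref{lemmaDectmra} by a direct case analysis over all possible choices Alice and Eve can make, using the two lookup tables (Table~\ref{tableBMBS} and Table~\ref{tableBMNBS}) as the computational engine. The worked example already handles the case $i = 1$, $r = 0$; the task is to verify that the same $0.5$ detection probability holds for the remaining three combinations $(i,r) \in \{(0,0),(0,1),(1,1)\}$, and then to aggregate these into the unconditional statement.

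\textbf{Setup of the cases.} First I would fix notation: Alice picks $i$, producing $\mathbb{E} = \ket{\Phi^+}$ (if $i=0$) or $\ket{\Psi^-}$ (if $i=1$), written as $\gamma_A\gamma_B$. Eve measures $\gamma_B$ in the Z-basis, obtaining $r \in \{0,1\}$, which by the structure of the Bell state forces $\gamma_A$ to collapse: for $\ket{\Phi^+}$ the post-measurement state of $\gamma_A$ is $\ket{r}$, while for $\ket{\Psi^-}$ it is $\ket{1-r}$. Eve then forwards $\ket{r}$ to Bob, who reflects it unchanged, so the pair fed into Alice's Bell measurement circuit is the product state $\gamma_A\gamma_B' = \ket{c}\ket{r}$ where $c$ is the collapsed value of $\gamma_A$ determined above. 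This reduces every case to reading off the row of Table~\ref{tableBMNBS} indexed by the computational-basis input $\ket{c r}$.

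\textbf{Reading the tables.} For each of the four $(i,r)$ pairs I would then note that the input $\ket{cr}$ is one of $\ket{00}, \ket{01}, \ket{10}, \ket{11}$, and that Table~\ref{tableBMNBS} assigns to it a uniform $1/2$--$1/2$ distribution over the two Bell states sharing the same value of $e_2$ (i.e.\ $\ket{00}$ and $\ket{11}$ both map to $\{\ket{\Phi^+},\ket{\Phi^-}\}$; $\ket{01}$ and $\ket{10}$ both map to $\{\ket{\Psi^+},\ket{\Psi^-}\}$). In every case exactly one of these two outcomes matches the Bell state $\mathbb{E}$ that Alice originally prepared (the ``secure'' verdict, a false negative) and the other does not (the ``attack detected'' verdict, a true positive); concretely, when $i=0$ the input is $\ket{00}$ or $\ket{11}$, so the two outcomes are $\ket{\Phi^+}$ and $\ket{\Phi^-}$, and only $\ket{\Phi^+}=\mathbb{E}$ agrees; when $i=1$ the input is $\ket{01}$ or $\ket{10}$, so the outcomes are $\ket{\Psi^+}$ and $\ket{\Psi^-}$, and only $\ket{\Psi^-}=\mathbb{E}$ agrees. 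Hence the conditional probability that \gls*{mrad} outputs one, given that Eve attacked the probing bit, equals $1/2$ regardless of $i$ and $r$.

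\textbf{Aggregation.} Finally, since the detection probability is $1/2$ conditioned on each value of $(i,r)$, the law of total probability gives the same value $1/2$ after averaging over Alice's choice of $i$ and over Eve's measurement outcome $r$, no matter what the joint distribution of $(i,r)$ is. This yields the claimed $0.5$ probability and completes the proof. I do not anticipate a genuine obstacle here — the argument is a finite enumeration — but the one point requiring care is the justification that reflection by Bob leaves the forwarded qubit in the product state $\ket{c}\ket{r}$ rather than re-entangling it with $\gamma_A$; this follows because, once Eve has measured $\gamma_B$, the global state is already a product state, and Bob's reflection is a passive operation on his subsystem, so no entanglement with $\gamma_A$ can be recreated.
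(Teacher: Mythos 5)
Your proposal is correct and follows essentially the same route as the paper: the paper proves the lemma by the worked example for $(i,r)=(1,0)$ and then asserts via Table~\ref{tableBMNBS} that the other three $(i,r)$ cases behave identically, which is exactly the enumeration you carry out in full. Your version merely makes explicit the collapse of $\gamma_A$ to $\ket{r}$ or $\ket{1-r}$, the resulting product-state input $\ket{cr}$ to the Bell measurement, and the averaging over $(i,r)$ — all details the paper leaves implicit — so there is nothing to correct.
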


\begin{theorem}\label{umdn}
	If \gls*{mrad} are repeated $n$ times, we have $1 - 0.5^n$ probability to detect \glspl*{attackmr} given that Eve attacks $n$ probing bits. 
\end{theorem}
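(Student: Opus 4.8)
The plan is to reduce the statement to a computation about $n$ independent Bernoulli trials. First I would fix the setup dictated by the hypothesis: Eve attacks every one of the $n$ probing bits. For $k = 1, \dots, n$, let $D_k$ denote the event that the $k$-th execution of \gls*{mrad} reports an \gls*{attackmr} (i.e.\ outputs one in Step~\ref{mradCorrect}). By Lemma~\ref{lemmaDectmra}, since Eve attacks the probing bit in round $k$, we have $\Pr[D_k] = 0.5$ for each $k$.

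Next I would establish that $D_1, \dots, D_n$ are mutually independent, so that the indicators $\mathbf{1}_{D_1}, \dots, \mathbf{1}_{D_n}$ are iid Bernoulli$(0.5)$. Each execution of \gls*{mrad} samples its own fresh bit $i$, prepares a fresh EPR pair $\mathbb{E}$ on its own two qubits $\gamma_A, \gamma_B$, and the residual randomness comes only from Eve's Z-basis measurement on the intercepted qubit and from the stochastic mapping of a non-Bell input in the Bell measurement circuit (Table~\ref{tableBMNBS}). Because distinct rounds act on physically distinct, independently prepared qubits, measurement outcomes in different rounds are independent, and hence so are the $D_k$.

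The conclusion is then immediate. The event that \emph{no} round detects the attack is $\bigcap_{k=1}^{n} D_k^{c}$, and by independence
\begin{equation}
\Pr\!\left[\bigcap_{k=1}^{n} D_k^{c}\right] \;=\; \prod_{k=1}^{n} \Pr[D_k^{c}] \;=\; \prod_{k=1}^{n} (1 - 0.5) \;=\; 0.5^{\,n}.
\end{equation}
Therefore the probability that at least one of the $n$ executions flags an \gls*{attackmr} is $1 - 0.5^{\,n}$, which is the claim.

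The one point deserving care — and the likely main obstacle — is the independence argument: one must rule out that Eve correlates outcomes across rounds by attacking adaptively. This is handled by the theorem's hypothesis itself, which fixes that Eve attacks all $n$ probing bits; under this hypothesis each round's detection probability is exactly $0.5$ no matter how Eve uses the value she measures, which is precisely what is needed for the product form above. I would state this explicitly rather than leave it implicit.
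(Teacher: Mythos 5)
Your proposal is correct and takes essentially the same route as the paper: the paper's one-line proof computes $1-\Pr[\mbox{\gls*{mrad} fails}]^n = 1-0.5^n$, relying on Lemma~\ref{lemmaDectmra} for the per-round failure probability $0.5$ and implicitly on the independence of the $n$ rounds, which you simply make explicit.
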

\begin{proof}
		$Pr[\mbox{detect \glspl*{attackmr}}] = 1 - Pr[\mbox{\gls*{mrad} fails}]^n = 1 - 0.5^n$
\end{proof}
\begin{remark}
	\gls*{mrad} essentially checks whether the probing bits sent by Alice had been measured by anybody else, but it cannot tell who measured them. It can detect \glspl*{attackmr} only because Alice knows that Bob does not measure probing bits. Therefore, if  any measurement is detected, it must be due to an attack. 
\end{remark}

Because, the operations defined in Steps~\ref{mradAlicePrep} and \ref{mradCorrect} are applied again in the sequel, we define them formally as follows,
\begin{definition}[Generating corresponding EPR pairs ($F$), Step~\ref{mradAlicePrep}]\label{funcMBS}
	Function $F$ maps a \gls*{cb} to an EPR pair such that $0 \mapsto \ket{\Phi^+}$, and $1 \mapsto \ket{\Psi^-}$
\end{definition}

\begin{definition}[Alice Examines (AE), Step~\ref{mradCorrect}]\label{ae}
	The function $AE(e_1, e_2, i): \{0,1\}^3 \rightarrow \{0,1\}$ equals zero if $e_1=e_2=i$; otherwise, it equals one. Recall that zero and one indicate negative and positive detection results, respectively. 
\end{definition}

\section{New Protocol}\label{secNP}
In this section, we propose a new \gls*{sqdc} protocol called \acrfull*{nk}. We start the discussion with an introduction to a data transmission protocol called \gls*{sbp}, which is a building block of \gls*{nk}  (not self-contained). Assuming that there are no \glspl*{attackmr}, we show that \gls*{sbp} is secure (Theorem~\ref{thmSbpSec}). In the design of \gls*{sbp}, Bell measurement seems redundant. It is intended for sharing the quantum circuit with \gls*{mrad} (Remark~\ref{rmkSBPDesign}).  The considerable benefits of this design are discussed in Section~\ref{secEvaluation}. To meet the assumption of Theorem~\ref{thmSbpSec}, we integrate \gls*{mrad} and \gls*{sbp} to get \gls*{nk}. If we assume that \gls*{nk} detects all \glspl*{attackmr}, then it is provably secure (Theorem~\ref{thmNkSecure}). 

\subsection{\acrfull*{sbp}}\label{sectionsbp}
\begin{protocol}[\gls*{sbp}]
A one-bit message $m$ (zero or one) is sent to Bob by Alice. We need a \gls*{pbacc} as well as a \gls*{pbqc}. The protocol functions as follows:
	\begin{enumerate}[label=P\arabic*]
		\item\label{SimAliceSend} Alice randomly picks a \gls*{cb} $i$  and  generates a corresponding EPR pair $\mathbb{E} = F(i)$ (Definition~\ref{funcMBS}) consisting of two \glspl*{qb} (denoted by $\gamma_A$ and $\gamma_B$).
		\item Alice keeps $\gamma_A$ and sends $\gamma_B$ to Bob.
		\item\label{SimBobRec} Upon reception, the \gls*{qb} $\gamma_B$ is measured by Bob in the Z-basis with the measurement result $u_B$ (simultaneously, $\gamma_A$ collapses because of the entanglement with $\gamma_B$). At the same moment, Bob sends a pre-prepared \gls*{qb} $\gamma_B^* = \ket{0}$ to Alice and informs her that he has measured $\gamma_B$ via the \gls*{pbacc}.
		\item\label{SimAliceMeas}Alice pairs $\gamma_A$ (retained in Step~\ref{SimAliceSend}) with $\gamma_B'$ and performs a Bell measurement on $\gamma_A\gamma_B^* = \gamma_A\ket{0}$ to get $e_1$ and $e_2$.
		
		\item\label{SimAliceProc}According to Table~\ref{tableBMNBS}, $e_1e_2=00 \mbox{ or } 10$ implies that $\gamma_A=\ket{0}$, and $e_1e_2=01 \mbox{ or } 11$ indicates that $\gamma_A=\ket{1}$. Combining $\gamma_A$ with the EPR pair $\mathbb{E}$ Alice selected (recorded by $i$) in Step~\ref{SimAliceSend}, Alice learns the measurement result $u_B$ of Bob in Step~\ref{SimBobRec}.
		In particular, if $i=0$, then the EPR pair she generated was $\ket{\Phi^+}$. Then $\gamma_A = \ket{0}$ implies $u_B = 0$, and $\gamma_A =\ket{1}$ implies $u_B = 1$. Similarly, if $i = 1$, the EPR pair that Alice generated was $\ket{\Psi^-}$. Then if $\gamma_A = \ket{0}$, $u_B = 1$; else, $u_B = 0$. 
		\item\label{AliceConfirm} Provided that $u_B = m$, Alice informs Bob, via the \gls*{pbacc}, that $u_B$ is the correct value. Otherwise, she informs Bob to take $1-u_B$.
	\end{enumerate}
\end{protocol}

\begin{remark}\label{rmkSBPDesign}
	The pre-generated \gls*{qb} $\gamma_B^* = \ket{0}$ in Step~\ref{SimBobRec} is unnecessary to implement \gls*{sbp}. So is the Bell measurement in Step~\ref{SimAliceMeas}. In fact, in Step~\ref{SimBobRec}, Bob only needs to notify Alice that he has measured $\gamma_B$, and, in Step~\ref{SimAliceMeas}, Alice simply uses the Z-basis to get the value of $\gamma_A$. Here, we intendedly implement \gls*{sbp} with redundant operations so that the new protocol (\gls*{nk}, introduced in Section~\ref{nk}) can use a single quantum circuit to implement both the attack detection (\gls*{mrad}) and data transmission (\gls*{sbp}) protocols. We discuss the design and its benefits in details in Section~\ref{nk}, and more performance analysis is conducted in Section~\ref{secEvaluation}. 
\end{remark}

The actions specified in Steps~\ref{SimAliceProc} and \ref{AliceConfirm} are used subsequently. We define them formally as follows. In Step~\ref{SimAliceProc}, Alice learns $r_B$ held by Bob with no contact. So the function is called \acrlong*{tc}. 
\begin{definition}[\acrlong*{tc}]\label{teleFetch}
	With the parameters $e_1$, $e_2$ and $i$ (Step~\ref{SimAliceSend}), function \acrlong*{tc} $\gls*{tc}(e_1, e_2, i)$ returns the value of $r_B$ (zero or one) based on the rule contained in Step~\ref{SimAliceProc}.
\end{definition}

Besides, in Step~\ref{AliceConfirm}, Alice rectifies the measurement result $r_B$ of Bob. As a result, we call the procedure Rectify.

\begin{procedure}[Rectify]\label{OCorrect}
	Based on the single bit message $m$ and value of $u_B$ (acquired in Step~\ref{SimAliceProc}), Alice informs Bob to apply the proper operation on $u_B$ by sending either the signal KEEP or FLIP via the \gls*{pbacc}. If KEEP is received, Bob considers $u_B$ as the message Alice sends; if not, he takes $1-u_B$.
\end{procedure}

The next theorem discusses the security of \gls*{sbp}. 

\begin{theorem}\label{thmSbpSec}
	As long as Bob gets the \gls*{qb} $\gamma_B$ sent by Alice without \gls*{attackmr}, \gls*{sbp} is secure. 
\end{theorem}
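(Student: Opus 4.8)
The plan is to prove confidentiality directly: enumerate everything the eavesdropper Eve can observe during an execution of \gls*{sbp} and show that, under the theorem's hypothesis, her total view is statistically independent of the message bit $m$, so that her a posteriori distribution on $m$ coincides with her a priori one and nothing leaks. Correctness --- that Bob ends up holding $m$ --- is immediate from Steps~\ref{SimAliceProc}--\ref{AliceConfirm} together with Definition~\ref{teleFetch} and Procedure~\ref{OCorrect}, so I would mention it only in passing and concentrate on the leakage claim.

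First I would fix the adversary model: Eve may read, but not forge, every classical message on the \gls*{pbacc}, and she may intercept, measure and resend \glspl*{qb} on the \gls*{pbqc}; the single restriction imposed here is the hypothesis, which I would read as ``the \gls*{qb} Bob measures in Step~\ref{SimBobRec} is the genuine, undisturbed half of the pair Alice prepared in Step~\ref{SimAliceSend}'' --- an \gls*{attackmr} would instead have replaced it by a fresh $\ket{0}$ or $\ket{1}$. With this, Eve's view consists of: on the quantum channel, the travelling \gls*{qb} $\gamma_B$ (which, by hypothesis, she did not measure) and the reply \gls*{qb} $\gamma_B^{*}$; on the classical channel, Bob's Step~\ref{SimBobRec} notification that he has measured $\gamma_B$, and Alice's Step~\ref{AliceConfirm} KEEP/FLIP signal.

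Next I would dispose of the quantum part. The reply \gls*{qb} is the constant state $\gamma_B^{*}=\ket{0}$, hence independent of everything. The travelling \gls*{qb} $\gamma_B$ is one half of a Bell pair, so its marginal is the maximally mixed state $\tfrac12 I$ regardless of $i$; moreover $i$ is uniform and independent of $m$, and $m$ enters the protocol only in Step~\ref{AliceConfirm}. The one quantity correlated with the hidden data of the run is Bob's outcome $u_B$, and this is exactly where the hypothesis bites: since $\gamma_B$ arrives intact, $u_B$ is produced by a Z-basis measurement on an undisturbed half of a maximally entangled pair, so $u_B$ is an unbiased coin, independent of $m$; and because the only system ever correlated with $u_B$ that Eve could have touched is $\gamma_B$ itself --- which, absent an \gls*{attackmr}, she did not measure --- her quantum register is independent of $u_B$ as well. (This is precisely the step that collapses if Eve performs an \gls*{attackmr}: she then holds $\ket{u_B}$ and the argument fails, which is why \gls*{sbp} is not self-contained.)

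Finally I would handle the classical transcript and assemble the pieces. Bob's Step~\ref{SimBobRec} message is always sent, so it is a constant and carries nothing. Alice's Step~\ref{AliceConfirm} signal is KEEP exactly when $u_B=m$ and FLIP otherwise, i.e.\ it is the single bit $u_B\oplus m$; since $u_B$ is uniform and independent of $m$, the bit $u_B\oplus m$ is uniform and independent of $m$ too. Hence Eve's entire view is a function of $\gamma_B$, $\gamma_B^{*}$, Bob's constant notification and $u_B\oplus m$, a tuple that is jointly independent of $m$; therefore her posterior on $m$ equals her prior and \gls*{sbp} is secure. I expect the one delicate point to be the bookkeeping that makes this independence \emph{joint} rather than merely pairwise --- in particular, checking that conditioning on the classical transcript cannot, through the quantum side, betray $u_B$ --- which holds here only because, under the no-\gls*{attackmr} hypothesis, Eve's accessible quantum systems form a tensor-product factor uncorrelated with $(m,i,u_B)$.
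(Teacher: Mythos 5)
Your proposal is correct, and it is substantially more rigorous than the paper's own argument, which is a three-sentence audit: Steps~\ref{SimAliceSend}--\ref{SimBobRec} are declared secure by the no-\gls*{attackmr} assumption, Steps~\ref{SimAliceMeas} and \ref{SimAliceProc} involve no communication, and the Step~\ref{AliceConfirm} signal is asserted to be ``irrelevant to the one-bit message $m$.'' That last assertion is the weak joint of the paper's proof: read literally it is false, since the KEEP/FLIP signal is the deterministic function $u_B\oplus m$ of the message. What the paper means, and what you actually prove, is that this bit is a one-time-pad ciphertext whose key $u_B$ is uniform and unknown to Eve precisely because $\gamma_B$ arrives at Bob as an undisturbed half of a Bell pair that she did not measure. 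Your explicit enumeration of Eve's view (maximally mixed $\gamma_B$, constant $\gamma_B^{*}$, constant notification, and the pad $u_B\oplus m$), together with the closing remark that the independence must be joint rather than merely pairwise, supplies exactly the bookkeeping the paper omits. The trade-off is length and scope: the paper's audit is adequate once one accepts its informal threat model (only \glspl*{attackmr} are considered), whereas your argument makes precise why the hypothesis is the right one --- it is exactly what guarantees Eve holds no system correlated with $u_B$ --- and makes visible where the proof collapses if she does.
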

\begin{proof}
By assuming the absence of \gls*{attackmr}, we essentially assume that Steps~\ref{SimAliceSend} to \ref{SimBobRec} are secure (only a confirmation is sent by Bob in Step~\ref{SimBobRec}). No communication happens in Step~\ref{SimAliceMeas} and \ref{SimAliceProc}. The last step involves a message sent by Alice which is irrelevant to the one-bit message $m$. So Step~\ref{AliceConfirm} is sheltered, too. Thence, to sum up, \gls*{sbp} is secure.
\end{proof}

\begin{remark}\label{rmkAuth}
An authentic classical channel is the prerequisite for the security of \gls*{sbp}. In the communication of Alice and Bob, it is significant to verify their identities and to ensure that their unencrypted messages are not altered. Namely, they should be resistant to \glspl*{mitm}.
\end{remark}

Theorem~\ref{thmSbpSec} shows that only when there is no \gls*{attackmr}, \gls*{sbp} is secure. However, \gls*{sbp} has no capability to detect \glspl*{attackmr}. Notice that \gls*{mrad} can detect \glspl*{attackmr} and thus can secure the data transmission of \gls*{sbp} by Theorem~\ref{thmSbpSec}. If we combine \gls*{sbp} and \gls*{mrad} together, we get the protocol discussed in the following subsection.

\subsection{\acrfull*{nk}}\label{nk}
We implement the protocol \gls*{nk} over the hardware of \gls*{sbp}. Specifically, there are a \gls*{pbacc} and a \gls*{pbqc} linking Alice and Bob. The following four procedures contain all the activities demanding quantum resources in \gls*{nk}.

\begin{procedure}[Alice sends]\label{OAS}
	In the $k^{th}$ transmission of Alice, she picks a random \gls*{cb} $i_k$ and stores it in a classical register. Then she produces an EPR pair $F(i_k)$, keeps the first \gls*{qb} $\gamma_{kA}$ and transmits the second  \gls*{qb} $\gamma_{kB}$ to Bob.
\end{procedure}

\begin{procedure}[Bob measures]\label{OBM}
	After receiving the $k^{th}$ \gls*{qb} from Alice, Bob measures it in the $Z$-basis and gets the result $u_k$. At the same moment, a pre-prepared \gls*{qb} $\ket{0}$ is sent back to Alice. Furthermore, Bob takes the record that he measured the $k^{th}$ \gls*{qb}. 
\end{procedure}

\begin{procedure}[Bob reflects]\label{OBR}
	The $k^{th}$ \gls*{qb} from Alice is reflected back without measurement by Bob. Also, he takes the record that he reflected the $k^{th}$ \gls*{qb} he received.
\end{procedure}

\begin{procedure}[Alice measures]\label{OAM}
	Alice receives the $k^{th}$ \gls*{qb} $\gamma_{kB}^*$ and performs Bell measurement on $\ket{\gamma_{kA}\gamma_{kB}^*}$ ($\gamma_{kA}$ was retained by Alice in Step~\ref{NKAliceSend} while implementing Procedure~\ref{OAS}) and records the measurement result as $e_{1k}e_{2k}$.
\end{procedure}

\begin{protocol}[\gls*{nk}]
	Assume that a message $\undertilde{M} = m_1m_2\cdots m_s$ of length $s$ is sent to Bob by Alice, and extra $r$ bits are added to detect \glspl*{attackmr}. Then the protocol functions as follows:
	\begin{enumerate}[label=C\arabic*]
		\item\label{NKAliceSend} Alice runs Procedure~\ref{OAS} for $s+r$ times and records the values of $i_k$ in string $\undertilde{I} = i_1 i_2 \cdots i_{s+r}$.	
		
		\item \label{NKBobMR} Bob randomly selects $s$ \glspl*{qb} (data bits) from the $s+r$ \glspl*{qb} that Alice sends to implement Procedure~\ref{OBM}. Regarding the residual $r$ \glspl*{qb} (probing bits), he executes Procedure~\ref{OBR}. All the measurement results $u_k$ from Procedure~\ref{OBM} are recorded in a new string $\undertilde{U} = u_1 u_2 \cdots u_s$ (after reindexing but preserving the order).
		\item\label{NKAliceMeas}  Alice performs Procedure~\ref{OAM} on the $s+r$ \glspl*{qb} that Bob sends back, and records the measurement results $e_{1k}e_{2k}$ in two strings $\undertilde{E_1} = e_{11}e_{12} \cdots e_{1(s+r)}$ and $\undertilde{E_2} = e_{21}e_{22} \cdots e_{2(s+r)}$, respectively.
		\item\label{NKBobNotify} Bob sends a binary string $\undertilde{P} = p_1 p_2 \cdots p_{s+r}$ to Alice through the \gls*{pbacc} to inform her about which \glspl*{qb} were reflected or measured in Step~\ref{NKBobMR}. For $k = 1, 2, \cdots, s+r$, $p_k = 0$ indicates that Bob reflected the $k^{th}$ \gls*{qb}, and $p_k = 0$ represents he measured it. 
		\item\label{NKAliceCheck} Alice iterates through  $\undertilde{P}$ sent by Bob. For $k = 1, 2, \cdots, s+r$, when $p_k = 0$, Alice applies function $AE(e_{1k},e_{2k},i_k)$ in Definition~\ref{ae}. If $AE(e_{1k},e_{2k},i_k) = 1$, then the $k^{th}$ \gls*{qb} sent by Alice is attacked by Eve (\gls*{attackmr}). Then the protocol is insecure and terminated. While if $p_k = 1$, Alice evaluates function $\gls*{tc}(e_{1k}, e_{2k}, i_k)$ in Definition~\ref{teleFetch} and records the value $c_{k}$. Remark that, before reindexing, $c_k$ coincides with $u_k$ in Step~\ref{NKBobMR}.
		\item\label{NKAliceFetch} Since $s$ \glspl*{qb} are measured by Bob, Alice applies function $\gls*{tc}$ $s$ times in Step~\ref{NKAliceCheck}. She records the values of $c_k$ in $\undertilde{C} = c_1 c_2 \cdots c_s$ (after reindexing without altering the order). Note that $\undertilde{C}$ coincides with $\undertilde{U}$ that is owned by Bob.
		\item\label{NKCorrect} Alice and Bob execute Procedure~\ref{OCorrect} with patemeters $m_k$, $c_k$ and $u_k$ ($k = 1, 2, \cdots, s$). Then Bob receives the message sent by Alice. 
	\end{enumerate}
\end{protocol}

Remarkably, Alice and Bob implement Steps~\ref{NKAliceSend} to \ref{NKAliceMeas} in parallel rather than sequentially. As a result, Alice is only required to be equipped with a small and fixed number of \glspl*{qbr}. Also, the time for Alice to keep the entanglement is a small constant irrelevant to the message length (we elaborate on this highlight in Section~\ref{secQuanRec}). The protocol essentially distributes a random string of length $m$ between Alice and Bob, which implies that our protocol is also a \gls*{sqkd} protocol. After sharing a binary string, Bob can receive messages from Alice by implementing Procedure~\ref{OCorrect}. To mitigate the cost of sharing keys, Step~\ref{NKCorrect} can be repeated for several message transmissions before adopting a new shared string $\undertilde{U}$ ($=\undertilde{C}$) (by repeating Steps~\ref{NKAliceSend} to \ref{NKAliceFetch}).

The \gls*{nk} protocol is an integration of \gls*{sbp} and \gls*{mrad}. Figure~\ref{figRelation} demonstrates that, in the first four steps, the operations belonging to Alice coincide. Notice that these four steps include all the operations of \gls*{sbp} and \gls*{mrad} that require quantum resources. As a result, without knowledge of the protocol she is in fact executing, Alice can use one quantum circuit to accomplish all the quantum operations required by either of the protocols.  

In Figure~\ref{figRelation}, we juxtapose the first four steps of \gls*{sbp} and \gls*{mrad} marked with the step numbers used to present them. On the right, the step numbers used in the \gls*{nk} protocol are also provided.

\begin{figure}[!h]
	\centering
	\includegraphics[width=0.4\textwidth]{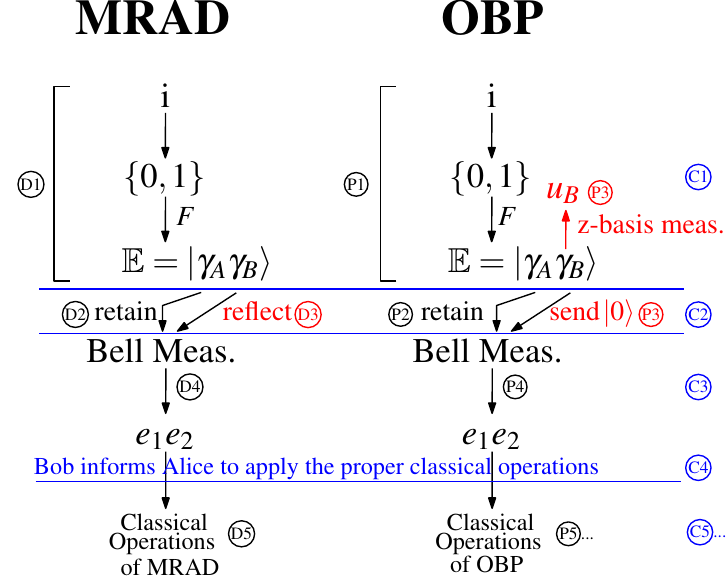}
	\caption{The relationships between \gls*{nk}, \gls*{sbp} and \gls*{mrad}. The actions in red are made by Bob and those in Black are performed by Alice. The step numbers in blue are the corresponding actions in \gls*{nk}. In \gls*{sbp} and \gls*{mrad}, quantum operations (first four steps) are quite similar except Bob's treatment on the \glspl*{qb} sent by Alice. 
	}\label{figRelation}
\end{figure}

From Figure~\ref{figRelation}, only the operation made by Bob differentiates \gls*{sbp} from \gls*{mrad}. Namely, Bob decides which protocol is being implemented. Specifically, to decide the protocol being applied,  Bob either measures $\gamma_{kB}$ (and send a pre-prepared $\ket{0}$ simultaneously) or reflects it. Note that the pre-preparation of the $\ket{0}$, instead of generating it on demand, secures \gls*{nk} against the delay and reflection attacks \citep{7848870}. A reflected $\gamma_{kB}$ functions as a probing bit to detect \glspl*{attackmr} (then Alice and Bob implement \gls*{mrad}), and a measured $\gamma_{kB}$ works a data bit for data exchange (then Alice and Bob implement \gls*{sbp}). After completing the first three steps of \gls*{nk}, Bob informs Alice of the qubits reflected or measured by sending a notification through the \gls*{pbacc}. Based on the message, Alice applies the corresponding classical operations to complete \gls*{mrad}s or \gls*{sbp}s.

\begin{remark}\label{rmksmn}
	The protocol being implemented is determined by Bob. If Bob chooses Measure, then it is \gls*{sbp}. If he chooses Reflect, then it is \gls*{mrad}. In \gls*{nk}, Bob selects Measure $s$ times and Reflect $r$ times. So Alice and Bob execute \gls*{sbp} $s$ times and \gls*{mrad} $r$ times. 
\end{remark}

\subsection{Security analysis of \gls*{nk}}\label{secSecurity}
The \gls*{nk} inherits the security of \gls*{sbp} and functions under the same assumption -- Alice and Bob must be connected by an authentic classical channel (Remark~\ref{rmkAuth}). By Remark~\ref{rmksmn}, \gls*{nk} with $s$ data bits and $r$ probing bits is equivalent to $s$ \gls*{sbp}s and $r$ \gls*{mrad}s. Recall that \gls*{mrad} is for detecting \glspl*{attackmr} and thus secures \gls*{sbp} (Theorem~\ref{thmSbpSec}). When \gls*{mrad}s are performed $n$ times, the possibility of detecting \glspl*{attackmr} is $1-0.5^n$ (Theorem~\ref{umdn}). In particular, since there are $r$ times executions of \gls*{mrad}s in the \gls*{nk} protocol, we have $1-0.5^r$ success rate of detection given that all \glspl*{qb} sent by Alice are measured by Eve. If we generalize the problem by assuming that Eve perpetrates \glspl*{attackmr} on the \glspl*{qb} with a fixed probability, we have the theorem as follows. 	
\begin{theorem}\label{thmDetectMRAp}
	Suppose that Alice and Bob implement the \gls*{nk} protocol with $s$ data bits and $r$ probing bits. For each \gls*{qb} sent by Alice, if Eve has probability $p$ to perpetrate \glspl*{attackmr}, then Alice has the probability $1-\left(1-p/2\right)^r$ to detect it. 
\end{theorem}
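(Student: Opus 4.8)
The plan is to reduce the claim to $r$ independent Bernoulli trials, one per probing bit, and then reuse the complement argument of Theorem~\ref{umdn}. By Remark~\ref{rmksmn} the $r$ probing bits are exactly the $r$ \glspl*{qb} on which Alice and Bob execute \gls*{mrad}, and, inspecting Step~\ref{NKAliceCheck}, the only way the \gls*{nk} protocol ever reports an attack is through one of these \gls*{mrad} instances (on a \gls*{qb} that Bob reflected); a measured \gls*{qb} is merely passed to \acrlong*{tc}. So it suffices to compute the probability that at least one of the $r$ \gls*{mrad} runs returns a positive result.

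First I would pin down the adversary model implied by the hypothesis: Eve decides independently, with probability $p$, whether to perpetrate an \gls*{attackmr} on each of the $s+r$ \glspl*{qb} that Alice transmits, and she must commit to those choices before Bob reveals in Step~\ref{NKBobNotify} which \glspl*{qb} he reflected. Consequently Bob's uniformly random choice of probing positions in Step~\ref{NKBobMR} is independent of Eve's attack pattern, so each probing bit is attacked with probability $p$. For a probing bit that is not attacked, \gls*{mrad} returns zero (there is nothing to detect); for a probing bit that is attacked, Lemma~\ref{lemmaDectmra} gives probability $0.5$ that \gls*{mrad} returns one, using the Bell-measurement randomness of Table~\ref{tableBMNBS}, which is drawn afresh for each pair.

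Next I would assemble these pieces. Conditioning on the set $R$ of probing positions, the number $N$ of probing bits that Eve attacks is $\mathrm{Binomial}(r,p)$, and, given $N$, the $N$ corresponding \gls*{mrad} runs fail independently with probability $0.5$ each; hence the conditional probability that not a single \gls*{mrad} fires equals $\sum_{j=0}^{r}\binom{r}{j}p^{j}(1-p)^{r-j}0.5^{j} = (1-p/2)^{r}$, which does not depend on $R$, so averaging over $R$ leaves it unchanged. Since Alice detects the attack exactly when at least one \gls*{mrad} fires, $\Pr[\text{detect}] = 1-(1-p/2)^{r}$, as claimed. Setting $p=1$ recovers the $1-0.5^{r}$ figure of Theorem~\ref{umdn}, a useful sanity check.

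The only delicate point is the independence bookkeeping in the second step — that conditioning on Bob's random split of the $s+r$ \glspl*{qb} into data and probing bits does not bias the per-qubit attack probability away from $p$. The conditioning-on-$R$-then-averaging computation above sidesteps it cleanly; alternatively one can invoke exchangeability of Eve's \gls*{iid} attack decisions directly and argue that, for any fixed probing bit $j$, the event ``bit $j$ does not trigger \gls*{mrad}'' has probability $(1-p)+p\cdot 0.5 = 1-p/2$ and that these $r$ events are mutually independent, after which $1-\prod_{j=1}^{r}(1-p/2) = 1-(1-p/2)^{r}$ finishes the proof. Everything else is routine.
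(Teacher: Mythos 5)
Your proposal is correct and its central computation is essentially the paper's own: condition on the binomial number of attacked probing bits, apply Lemma~\ref{lemmaDectmra} (via Theorem~\ref{umdn}) to get the conditional miss probability $0.5^{h}$, and collapse the sum with the binomial theorem to $(1-p/2)^{r}$. The extra care you take with the adversary model and the independence of Bob's probing-position choice, as well as the cleaner per-bit factorization $(1-p)+p\cdot 0.5 = 1-p/2$, are refinements the paper leaves implicit, but they do not change the route.
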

\begin{proof}
	Let $A$ be the number of probing bits attacked by Eve. As Eve has possibility $p$ to   perpetrate an \gls*{attackmr} on each probing bit, $A$ follows a binomial distribution having success rate $p$ with $r$ trials. Let $D$ be a boolean \gls*{rv} such that $D = 1$ if Alice detects an attack and $D = 0$ if not. Then the expectation $E(D)$ is the probability of detecting an attack, and it satisfies
	
	{\small$\begin{aligned}
		E[D] &= 0\cdot Pr[D = 0] + 1\cdot Pr[D =1] = Pr[D=1] = \sum\limits_{h=0}^{r} Pr[D=1|A=h] Pr[A=h] && \mbox{(Law of total probability)}\\
		& = \sum\limits_{h=0}^{r} \left(1-0.5^h\right) \binom{r}{h}p^h(1-p)^{r-h} && \mbox{(by Theorem~\ref{umdn})}\\
		& = \sum\limits_{h=0}^{r}\binom{r}{h}p^h(1-p)^{r-h} - \sum\limits_{h=0}^{r}0.5^{h}\binom{r}{h}p^h(1-p)^{r-h} = 1-\left(0.5 p+(1-p)\right)^r && (\mbox{Binomial expansion})\\
		& = 1-\left(1-0.5 p\right)^r
		\end{aligned}$}\vspace{-1.8em}
		
\end{proof}

Regarding Theorem~\ref{thmDetectMRAp}, if $p=1$, Eve attacks all the \glspl*{qb} that Alice sends. Then the probability of detecting an \gls*{attackmr} is $E(D) = 1 - (1-0.5)^r = 1 - 0.5^r$, which is consistent with the discussion at the beginning of Section~\ref{secSecurity}.

Figure~\ref{figEveP} plots the trend of the detection success rate, calculated according to the formula stated in Theorem~\ref{thmDetectMRAp}. We also scatter the experimental results (the points) from simulation. According to what the legend shows, the points and curves colored the same  share the same attack probability $p$. The results of the simulation agree with the theoretical analysis in Theorem~\ref{thmDetectMRAp}.
The figure shows that the detection success rate  approaches to one more rapidly as $p$ increases. This trend is due to the fact that a higher attack rate leads to a higher average number of affected probing bits and thus boosts the detection success rate. A similar trend can be observed if the probing bit number $r$ increases.

\begin{figure}[!h]
	\centering
	\includegraphics[width=0.65\columnwidth]{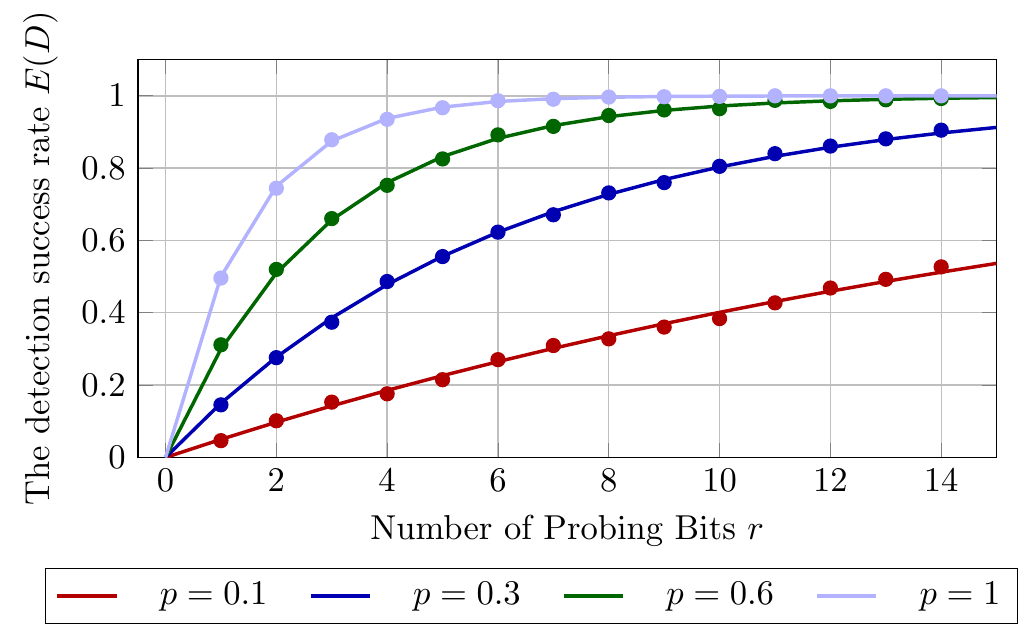}
	\caption{The detection success rate $E(D)$ with respect to the probing bit number $r$ by selecting various attack probability~$p$.}\label{figEveP}
\end{figure}

The next theorem shows the \gls*{nk} protocol is secure under the assumption that we can always detect \glspl*{attackmr}.
\begin{theorem}\label{thmNkSecure}
	The \gls*{nk} protocol is secure if \glspl*{attackmr} can always be detected. 
\end{theorem}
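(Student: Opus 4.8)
The plan is to reduce the security of \gls*{nk} to two facts already in hand: the conditional security of \gls*{sbp} (Theorem~\ref{thmSbpSec}) and the decomposition noted in Remark~\ref{rmksmn}, by which one run of \gls*{nk} with $s$ data bits and $r$ probing bits is the interleaving of $s$ independent \gls*{sbp} executions with $r$ independent \gls*{mrad} executions, the split being chosen by Bob. Throughout I keep the standing hypothesis of Remark~\ref{rmkAuth} (an authentic classical channel, so \glspl*{mitm} on the \gls*{pbacc} are ruled out) and the hypothesis of the theorem, namely that every \gls*{attackmr} is detected. The argument then splits on whether Eve perpetrates at least one \gls*{attackmr} on the $s+r$ \glspl*{qb} exchanged between Alice and Bob.

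\textbf{Case 1: Eve never measures a transmitted \gls*{qb}.} Then no \gls*{qb} $\gamma_{kB}$ is disturbed, so for each of the $s$ \glspl*{qb} Bob measures the hypothesis of Theorem~\ref{thmSbpSec} is met and the corresponding \gls*{sbp} instance is secure. It remains to check that the additional material \gls*{nk} puts on the wire carries nothing about $\undertilde{M}$: the reflected probing \glspl*{qb}, Bob's reordered notification string $\undertilde{P}$ of Steps~\ref{NKBobMR} and \ref{NKBobNotify}, and the bare measurement confirmations. Each reflected \gls*{qb} is one half of an EPR pair whose partner never leaves Alice, hence maximally mixed from Eve's viewpoint; $\undertilde{P}$ records only Bob's private Measure/Reflect choices; and the random string $\undertilde{I}$ stays with Alice. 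None of these is correlated with $\undertilde{M}$, so the composition leaks no more than the individual \gls*{sbp} runs, i.e. nothing.

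\textbf{Case 2: Eve performs at least one \gls*{attackmr}.} By hypothesis the attack is caught when Alice evaluates $AE(e_{1k},e_{2k},i_k)$ in Step~\ref{NKAliceCheck}, so the protocol terminates before Step~\ref{NKCorrect} is reached. The only transmission in the whole run whose content depends on $\undertilde{M}$ is the stream of \textsf{KEEP}/\textsf{FLIP} signals emitted by Procedure~\ref{OCorrect} inside Step~\ref{NKCorrect}; since that step is never executed, Eve obtains no $\undertilde{M}$-dependent data at all. Even if Eve has read some of Bob's raw outcomes $u_k$ by measuring data \glspl*{qb}, those bits are independent of $\undertilde{M}$ until the (never-sent) \textsf{KEEP}/\textsf{FLIP} signals pin down $m_k=u_k$ or $m_k=1-u_k$, and an aborted run is discarded and re-executed with fresh randomness. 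Hence Eve's view is independent of $\undertilde{M}$ in this case too, and combining the two cases yields the theorem.

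I expect the main obstacle to be making Case~1 fully rigorous: arguing that ``Eve never measures a transmitted \gls*{qb}'' genuinely forces her view to be $\undertilde{M}$-independent, i.e. that within the \gls*{attackmr} threat model adopted here no passive handling of the intercepted (individually maximally mixed) \glspl*{qb}, combined with the authentic-but-public classical traffic, can manufacture a correlation with $\undertilde{M}$. The one-time-pad–style observation that $u_k$ is uniform and independent of $m_k$ from Eve's standpoint, together with the fact that the sole $\undertilde{M}$-dependent message is produced strictly after the detection check of Step~\ref{NKAliceCheck}, is the hinge that closes both cases.
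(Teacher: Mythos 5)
Your proposal is correct and follows essentially the same route as the paper's own proof: a case split on whether Eve attacks (in which case the hypothesis guarantees detection and termination before any message-dependent \textsf{KEEP}/\textsf{FLIP} signals are sent) versus no attack (in which case Remark~\ref{rmksmn} reduces the run to $s$ instances of \gls*{sbp}, each secure by Theorem~\ref{thmSbpSec}). You merely flesh out details the paper leaves implicit, such as why the probing qubits and the string $\undertilde{P}$ carry no information about $\undertilde{M}$.
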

\begin{proof}
	Alice and Bob terminate the protocol if an \gls*{attackmr} is detected. So the security of the message is guaranteed. Otherwise, there is no attack because of the assumption. According to Remark~\ref{rmksmn}, the \gls*{nk} protocol with $s$ data bits performs \gls*{sbp} for $s$ times. Combining with Theorem~\ref{thmSbpSec}, we conclude that the \gls*{nk} protocol is resistant to any network attack.
\end{proof}

\begin{remark}
	It is the prerequisite for Theorem~\ref{thmNkSecure} that only one \gls*{qb} is involved when Alice and Bob send, measure or reflect \glspl*{qb}. This implies that the implementation of the protocol requires a generator of an individual photon stream which is, however, currently not available.  In practice, if we use weak laser pulses out of expediency, more than one photon may be included. This enables \gls*{pns} attacks which cannot be handled by our protocol. To avoid the attacks related to \gls*{pns}, readers may refer to \citep{PhysRevLett.92.057901,PhysRevLett.94.230504,Kalashnikov:11,Zhou:14}. The same comment also applies to Theorem~\ref{thmNoAErr}.
\end{remark}

\section{\acrfull*{renk}}\label{secRev}
\gls*{nk} detects \glspl*{attackmr} and is secure assuming no hardware fault nor environmental disturbance that destroy entanglement.  So far, we ignored them for the sake of simplicity. They do exist in practice. Ignoring them produces false positives and incorrect protocol terminations. In this section, we enhance the detection part of the protocol to fix this issue. Destructions of entanglement involving probing bits may result in \glspl*{pm} whose probability is denoted by $\rho$ and estimated by its rate $$\hat{\rho}:=\frac{\mbox{Number of \glspl*{pm}}}{\mbox{Number of probing bits}}.$$ 
The destructions have two types. In particular, we say that a \gls*{qb} is disturbed if the entanglement involving it is destroyed due to a hardware imperfection or an environmental disturbance. If the destruction of the entanglement is caused by an eavesdropper Eve, we say the \gls*{qb} is attacked. We show that two times $\hat{\rho}$ is an estimator $\hat{\kappa}$ of the probability $\kappa$ that a probing bit is disturbed or attacked. Let $\omega$ denote the probability that a \gls*{qb} is disturbed. If $\omega$ is unknown, we can estimate it ahead of the protocol execution assuming that Eve does not perpetrate attacks. As no \glspl*{qb} are attacked during the estimation, $\kappa$ is reduced to $\omega$. Correspondingly, $\hat{\kappa}$ is reduced to $\hat{\omega}$, an estimator of $\omega$. During the execution of the protocol, the attacks perpetrated by Eve increase $\kappa$ and cause its deviation from $\omega$. By monitoring the difference between $\kappa$ and $\omega$, we gauge the existence of attacks and thus the security of the protocol. We use the following symbols and facts for the statistical analysis in the sequel. Let $B(n,p)$ be a binomial distribution with $n\in \mathbb{N}$ trials and success rate $p \in [0,1]$, $N(\mu, \sigma^2)$ a normal distribution with mean $\mu\in \mathbb{R}$ and variance $\sigma^2$ and $\bar{X}$ the arithmetic mean of $X$.

\begin{remark}
	We call $B(1,p)$ a Bernoulli distribution with the success rate $p$.
\end{remark}
\begin{remark}\label{rmkBinSum}
	\glspl*{rv} of binomial distributions can be added if they have the same success rate. In particular, if $X\sim B(n, p)$ and $Y\sim B(m,p)$, then $X+Y \sim B(n+m,p)$ \citep{hogg2014probability}.
\end{remark}

\begin{fact}\label{factStdNormal}
	Suppose $X\sim N(\mu_X,\sigma_X^2)$. Then $\frac{X-\mu_X}{\sigma_X}$ follows a standard normal distribution. Namely, 
	$\frac{X-\mu_X}{\sigma_X}\sim N(0,1)$. 
\end{fact}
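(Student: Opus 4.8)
The plan is to derive the probability density of $Z := (X-\mu_X)/\sigma_X$ from that of $X$ by the change-of-variables formula for a monotone transformation, and then observe that the resulting density is exactly that of $N(0,1)$. Because $\sigma_X>0$ — the standard deviation of a non-degenerate normal is strictly positive — the map $x \mapsto (x-\mu_X)/\sigma_X$ is a strictly increasing affine bijection of $\mathbb{R}$ onto itself, with inverse $z \mapsto \sigma_X z + \mu_X$, so the transformation theorem applies with no case analysis.

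First I would write the density of $X\sim N(\mu_X,\sigma_X^2)$ as $f_X(x) = \frac{1}{\sqrt{2\pi}\,\sigma_X}\,e^{-(x-\mu_X)^2/(2\sigma_X^2)}$. For the substitution $z=(x-\mu_X)/\sigma_X$, i.e. $x = \sigma_X z + \mu_X$, the Jacobian is $\left|\frac{dx}{dz}\right| = \sigma_X$, so $f_Z(z) = f_X(\sigma_X z + \mu_X)\cdot \sigma_X$. Substituting, the exponent becomes $(\sigma_X z + \mu_X - \mu_X)^2/(2\sigma_X^2) = z^2/2$, and the prefactor $\sigma_X$ cancels the $1/\sigma_X$ in $f_X$, leaving $f_Z(z) = \frac{1}{\sqrt{2\pi}}\,e^{-z^2/2}$. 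This is precisely the density of the standard normal, and since the density determines the law of an absolutely continuous random variable, $Z\sim N(0,1)$ as claimed.

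If one prefers to avoid invoking the transformation theorem, the same conclusion follows from the cumulative distribution function: $F_Z(z) = \Pr[Z\le z] = \Pr[X\le \sigma_X z+\mu_X] = F_X(\sigma_X z + \mu_X)$, and differentiating in $z$ via the chain rule yields the same $f_Z$. A third route is through moment generating functions: $M_Z(t) = \mathbb{E}[e^{tZ}] = e^{-t\mu_X/\sigma_X}M_X(t/\sigma_X) = e^{t^2/2}$, the MGF of $N(0,1)$, followed by an appeal to the uniqueness of moment generating functions.

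There is no genuine obstacle here; the statement is a standard textbook fact (see, e.g., \citep{hogg2014probability}) and in the final version it could simply be cited. The only point worth a word of care is the strict positivity of $\sigma_X$, which is what makes the transformation invertible and removes any ambiguity in the sign of the Jacobian; I would state this explicitly before the one-line density computation.
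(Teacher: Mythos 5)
Your proof is correct. The paper itself supplies no proof of this Fact --- it is stated as a standard textbook result (the surrounding Facts and Theorem~\ref{thmBern} are attributed to the probability text cited in the paper), so your change-of-variables derivation, and your own closing observation that a citation would suffice, are entirely consistent with how the paper treats it.
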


\begin{fact}\label{factNormalFact}
	Suppose $X\sim N(\mu_X,\sigma_X^2)$ and $n \in \mathbb{R}^+$. Then, $\frac{X}{n} \sim N(\frac{\mu_X}{n},\frac{\sigma_X^2}{n^2})$. 
\end{fact}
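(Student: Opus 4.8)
The plan is to use the change-of-variables (CDF) method, since for $n\in\mathbb{R}^+$ the map $x\mapsto x/n$ is a strictly increasing affine bijection of $\mathbb{R}$. First I would set $Y:=X/n$ and write its cumulative distribution function as $F_Y(y)=\Pr[Y\le y]=\Pr[X\le ny]=F_X(ny)$, where the direction of the inequality is preserved precisely because $n>0$. Differentiating in $y$ and applying the chain rule yields the density $f_Y(y)=n\,f_X(ny)$.

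Next I would substitute the normal density $f_X(x)=\frac{1}{\sqrt{2\pi}\,\sigma_X}\exp\!\left(-\frac{(x-\mu_X)^2}{2\sigma_X^2}\right)$ evaluated at $x=ny$ and simplify. The prefactor becomes $\frac{n}{\sqrt{2\pi}\,\sigma_X}=\frac{1}{\sqrt{2\pi}\,(\sigma_X/n)}$, and, factoring $n^2$ out of numerator and denominator, the exponent becomes $-\frac{(ny-\mu_X)^2}{2\sigma_X^2}=-\frac{(y-\mu_X/n)^2}{2(\sigma_X/n)^2}$. Recognising the resulting expression as the density of $N\!\left(\mu_X/n,\sigma_X^2/n^2\right)$ completes the argument.

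Two alternatives would work equally well. One is the moment generating function route: $M_Y(t)=E[e^{tX/n}]=M_X(t/n)=\exp\!\left(\mu_X t/n+\sigma_X^2 t^2/(2n^2)\right)$, which is exactly the MGF of $N\!\left(\mu_X/n,\sigma_X^2/n^2\right)$, so uniqueness of MGFs finishes it. Another is to invoke Fact~\ref{factStdNormal} to write $X=\mu_X+\sigma_X Z$ with $Z\sim N(0,1)$, whence $Y=\mu_X/n+(\sigma_X/n)Z$, and then observe that shifting and scaling a standard normal produces a normal with the correspondingly transformed mean and variance.

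There is no substantive obstacle here: the statement is a routine instance of the closure of the normal family under affine transformations. The only points demanding a little care are the Jacobian factor $n$ that enters when passing from the CDF to the density, and the algebraic bookkeeping in the exponent — in particular, pulling $n^2$ through so that the \emph{variance} comes out as $\sigma_X^2/n^2$ rather than the standard deviation as $\sigma_X/n$.
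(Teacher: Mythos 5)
Your proof is correct and complete. Note that the paper itself offers no proof of this statement: it is listed as a \emph{Fact}, i.e., a standard textbook property of the normal family invoked without argument, so there is no in-paper derivation to compare against. Your CDF/change-of-variables argument is a valid and rigorous justification; the Jacobian factor $n$ and the rewriting of the exponent as $-\frac{(y-\mu_X/n)^2}{2(\sigma_X/n)^2}$ are exactly the two places where care is needed, and you handle both. The two alternatives you mention (the MGF route and the representation $X=\mu_X+\sigma_X Z$ combined with Fact~\ref{factStdNormal}) are equally standard and would be the shortest way to dispatch this if one wanted to tie it explicitly to the other facts the paper states.
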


\begin{fact}\label{factNormalAdd}
	Suppose $X\sim N(\mu_X,\sigma_X^2)$ and $Y\sim N(\mu_Y, \sigma_Y^2)$ are independent. Then, ${X-Y \sim N(\mu_X-\mu_Y,\sigma_X^2+\sigma_Y^2)}$. 
\end{fact}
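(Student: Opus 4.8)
The plan is to argue via moment generating functions (MGFs), which determine a distribution uniquely and factor over independent summands. First I would record the MGF of a normal: if $Z\sim N(\mu,\sigma^2)$ then $M_Z(t):=E[e^{tZ}]=\exp\!\big(\mu t+\tfrac12\sigma^2 t^2\big)$ for every $t\in\mathbb{R}$, obtained by completing the square inside the Gaussian integral. I would also note that $-Y\sim N(-\mu_Y,\sigma_Y^2)$ --- immediate from the substitution $y\mapsto -y$ in its density --- so that $M_{-Y}(t)=M_Y(-t)=\exp\!\big(-\mu_Y t+\tfrac12\sigma_Y^2 t^2\big)$.

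Next, since $X$ and $Y$ are independent, so are $X$ and $-Y$, hence $M_{X-Y}(t)=M_X(t)\,M_{-Y}(t)=\exp\!\big((\mu_X-\mu_Y)t+\tfrac12(\sigma_X^2+\sigma_Y^2)t^2\big)$. The right-hand side is exactly the MGF of $N(\mu_X-\mu_Y,\sigma_X^2+\sigma_Y^2)$, so the uniqueness theorem for MGFs (applicable here because all the MGFs involved are finite on all of $\mathbb{R}$) gives $X-Y\sim N(\mu_X-\mu_Y,\sigma_X^2+\sigma_Y^2)$, as claimed. The same computation with $M_Y$ in place of $M_{-Y}$ would handle the sum $X+Y$ as well, should that variant be needed.

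The argument has essentially no obstacle; the one point requiring care is the appeal to the MGF uniqueness/inversion theorem, which is what actually forces the resulting law to be the named normal rather than merely one sharing its first two moments. If a fully elementary proof is preferred, one can instead evaluate the convolution $f_{X-Y}(z)=\int_{-\infty}^{\infty} f_X(z+y)\,f_Y(y)\,dy$ and complete the square twice --- once in $y$, then in $z$ --- a longer but routine Gaussian integral yielding the same result; characteristic functions may be substituted for MGFs throughout to sidestep any integrability bookkeeping.
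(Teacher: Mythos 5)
Your MGF argument is correct and complete: the normal MGF, the factorization over independent summands, and the appeal to MGF uniqueness (valid here since the MGFs are finite on all of $\mathbb{R}$) together establish the claim. The paper offers no proof of this statement --- it is presented as a standard textbook fact --- so there is nothing to compare against; your argument is the standard one and would serve as a correct justification.
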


Theorem~\ref{thmKappaEst} discusses the random processes in the detection of disturbed and attacked \glspl*{qb}.
\begin{theorem}\label{thmKappaEst}
	Suppose that in the \gls*{nk} protocol, Bob reflects $r$ \glspl*{qb}. Let $D_i\in\{0,1\}$ denote a \gls*{rv} of the detection result $d_i$ of the $i^{th}$ \gls*{mrad} such that:
		\[   
		d_i = 
		     \begin{cases}
		       1 &\quad\text{if the $i^{th}$ MRAD has a positive detections}\\
		       0 &\quad\text{otherwise}\\
		     \end{cases}
		\]
	Then $D_i$'s are \gls*{iid} $B(1, \rho)$. Or in short, $D_i \iid B(1, \rho)$ . The number of \glspl*{pm} (denoted by $C_\rho$) is $\sum_{i=1}^{r}D_i$, which is a binomial distribution $B(r, \rho)$. Moreover, $\rho=\frac{\kappa}{2}$. 
\end{theorem}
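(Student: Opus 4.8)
The plan is to analyse a single reflected probing bit, decompose its \gls*{mrad} outcome according to whether the entanglement between Alice's retained qubit $\gamma_{iA}$ and the returning qubit $\gamma_{iB}'$ survives the round trip, and show that each of the two mechanisms that can break that entanglement (an environmental disturbance or an \gls*{attackmr}) leaves Alice's Bell-measurement circuit fed with a computational-basis pair, which it flags with probability exactly $1/2$. From there the distributional claims are bookkeeping.

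First I would fix $i\in\{1,\dots,r\}$ and introduce the event $K_i$ that the $i^{\text{th}}$ probing bit is disturbed or attacked, i.e.\ that the entanglement of $\gamma_{iA}$ with $\gamma_{iB}'$ is destroyed. By the modelling assumptions the environment disturbs each probing bit independently with probability $\omega$, and (as in the proof of Theorem~\ref{thmDetectMRAp}) Eve attacks each bit independently with a fixed probability; hence the events $K_i$ are independent and share a common probability, which is precisely $\kappa=\Pr[K_i]$. Next I would condition on $K_i$. If $K_i$ does not occur, then $\gamma_{iA}\gamma_{iB}'$ is still the EPR pair $F(i_k)$ that Alice prepared, so by Table~\ref{tableBMBS} the Bell measurement returns the matching deterministic outcome and $AE(e_{1i},e_{2i},i_k)=0$; thus $\Pr[d_i=1\mid\neg K_i]=0$. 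If $K_i$ occurs, then $\gamma_{iA}$ has already collapsed to a $Z$-basis state and the qubit Alice gets back is a computational-basis qubit uncorrelated with it --- whether it was freshly resent by Eve after her $Z$-measurement or decohered by the environment --- so $\gamma_{iA}\gamma_{iB}'$ is one of $\ket{00},\ket{01},\ket{10},\ket{11}$, and exactly as in Lemma~\ref{lemmaDectmra} and the Example preceding it, Table~\ref{tableBMNBS} gives $\Pr[d_i=1\mid K_i]=\tfrac12$, independently of whether the bit was only disturbed, only attacked, or both. The law of total probability then gives $\Pr[d_i=1]=\tfrac12\kappa$, and defining $\rho:=\Pr[d_i=1]$ yields both $D_i\sim B(1,\rho)$ and the relation $\rho=\kappa/2$.

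For the iid claim I would note that $D_i$ is a function of the independent events $K_i$ together with the fresh quantum randomness of the $i^{\text{th}}$ Bell measurement, so the $D_i$ are mutually independent with the common law $B(1,\rho)$; restricting attention to the $r$ reflected qubits (rather than all $s+r$ sent by Alice) does not affect this, since disturbance and attack act identically and independently on every transmitted qubit. Finally $C_\rho=\sum_{i=1}^{r}D_i$ is a sum of $r$ \gls*{iid} $B(1,\rho)$ variables, so $C_\rho\sim B(r,\rho)$ by Remark~\ref{rmkBinSum}.

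The step I expect to be the main obstacle is the uniform value $\Pr[d_i=1\mid K_i]=\tfrac12$. For the \gls*{attackmr} branch this is already Lemma~\ref{lemmaDectmra}, but for a merely \emph{disturbed} qubit one must pin down operationally what ``the entanglement is destroyed'' means and verify it still presents Alice's circuit with a computational-basis pair (equivalently, a state whose four Bell-measurement outcomes split $1/2$--$1/2$ between agreement and disagreement with $F(i_k)$). I would dispatch the residual cases --- a bit that is simultaneously disturbed and attacked, and the precise decoherence model --- by observing that in every such scenario Alice's two inputs are unentangled with $\gamma_{iA}$ already fixed in the $Z$-basis, which is exactly the regime tabulated in Table~\ref{tableBMNBS}.
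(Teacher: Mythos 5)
Your proposal is correct and follows essentially the same route as the paper's proof: the paper introduces an indicator $A_i$ for ``disturbed or attacked'' (your $K_i$), invokes Lemma~\ref{lemmaDectmra} to get $\Pr[D_i=1\mid A_i=1]=\tfrac12$ and $\Pr[D_i=1\mid A_i=0]=0$, and applies the law of total probability to obtain $\rho=\kappa/2$, with the binomial claim following from Remark~\ref{rmkBinSum}. The only difference is that you explicitly justify extending the $\tfrac12$ detection probability from attacked to merely disturbed qubits, a point the paper passes over by citing Lemma~\ref{lemmaDectmra} directly for both cases.
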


\begin{proof}
	Since $\rho$ is the probability of positive detection and all \glspl*{mrad} are mutually independent, $D_i \iid B(1, \rho)$ for $i = 1 \cdots r$. Then the number of \glspl*{pm} $C_\rho = \sum_{i = 1}^{n} D_i$. By Remark~\ref{rmkBinSum}, we have $C_\rho \sim B(r,\rho)$. Let $A_i$ be a \gls*{rv} such that, if the probing bit of $i^{th}$ \gls*{mrad} is disturbed or attacked, then $A_i = 1$; otherwise, $A_i = 0$. So we have, $Pr[A_i = 0] = 1- \kappa$ and $Pr[A_i = 1] = \kappa$. 
	 According to Lemma~\ref{lemmaDectmra}, if the probing bit is disturbed or attacked, the probability of a positive detection is $Pr[D_i = 1| A_i = 1] = \frac{1}{2}$. Otherwise, the probing bit is intact which implies that the result must be negative. Namely, $Pr[D_i = 1| A_i = 0]=0$. By Law of total probability, 
	 $$\rho = Pr[D_i = 1] = Pr[D_i = 1|A_i = 1]\cdot Pr[A_i = 1] + Pr[D_i = 1|A_i = 0]\cdot Pr[A_i = 0] = \frac{1}{2}\cdot \kappa + 0 \cdot (1-\kappa) = \frac{\kappa}{2}\mbox{.}$$
\end{proof}

\begin{remark}
	A binomial distribution $N(n,p)$	 has mean $np$ and variance $np(1-p)$. So the binomial distribution $C_\rho \sim B(r,\rho)$ in Theorem~\ref{thmKappaEst} has mean $r\rho= \frac{\kappa}{2} r$ and variance ${r\rho(1-\rho)=\frac{1}{2}r\kappa(1-\frac{\kappa}{2})}$ \citep{hogg2014probability}.
\end{remark}

\begin{remark}\label{rmkBinAppr}
	A binomial distribution $B(n,p)$ can be approximated by a normal distribution with the same mean and variance if $n\geq max\left\{\frac{45(1-2p)^2}{p(1-p)}, \frac{14|1-6p(1-p)|}{p(1-p)
	} \right\}$ \citep{DasGupta2010}. 
	Therefore, the binomial distribution $C_\rho \sim B(r,\frac{\kappa}{2})$ in Theorem~\ref{thmKappaEst} has a normal approximation $N(\frac{\kappa}{2} r, \frac{1}{2}r\kappa(1-\frac{\kappa}{2}))$ if $r\geq max\left\{\frac{180(1-\kappa)^2}{p(2-\kappa)}, \frac{56|1-3\kappa(1-0.5\kappa)|}{\kappa(2-\kappa)} \right\}$.
\end{remark}

Theorem~\ref{thmBern} provides a method to estimate the parameter $p$ of a Bernoulli distribution \citep{hogg2014probability}.

\begin{theorem}\label{thmBern}
\begin{sloppypar}
	Suppose that $X_i\iid B(1,p)$ for $i\in\{1,2, \cdots, n\}$. Then ${{\hat{p} =\bar{X}=\frac{\Sigma_{i = 1}^{n}x_i}{n}}}$, an unbiased estimator of $p$. 
\end{sloppypar}
\end{theorem}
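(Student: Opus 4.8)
The plan is to unwind the definition of an unbiased estimator and then invoke linearity of expectation. Recall that $\hat{p}$ is unbiased for $p$ precisely when $E[\hat{p}] = p$ for every admissible value of $p$. Since $\hat{p} = \bar{X} = \frac{1}{n}\sum_{i=1}^{n} X_i$ is an explicit linear function of the $X_i$, the natural route is to compute $E[\bar{X}]$ directly and check that it collapses to $p$.

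First I would pull the constant $\frac{1}{n}$ out of the expectation and distribute over the sum, using that expectation is linear irrespective of any dependence among the summands:
\[
E[\bar{X}] = E\!\left[\frac{1}{n}\sum_{i=1}^{n} X_i\right] = \frac{1}{n}\sum_{i=1}^{n} E[X_i].
\]
Next I would evaluate $E[X_i]$ for a single Bernoulli variable: since $X_i \sim B(1,p)$ takes the value $1$ with probability $p$ and $0$ with probability $1-p$, we have $E[X_i] = 1\cdot p + 0\cdot(1-p) = p$. Substituting this back yields $E[\bar{X}] = \frac{1}{n}\cdot n p = p$, which is exactly the unbiasedness condition, completing the argument.

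The computation is entirely routine, so there is no genuine obstacle here; the only point worth flagging is that the full strength of the \iid hypothesis is not needed for unbiasedness alone — identical marginal means $E[X_i]=p$ suffice, and independence only becomes essential when one subsequently wants the variance of $\hat{p}$ (as used in the normal-approximation arguments of Remark~\ref{rmkBinAppr}). As an alternative one-line derivation, one could instead observe that $n\bar{X} = \sum_{i=1}^{n} X_i \sim B(n,p)$ by Remark~\ref{rmkBinSum}, whence $E[n\bar{X}] = np$ and dividing by $n$ gives the claim; but the linearity computation above is the most transparent.
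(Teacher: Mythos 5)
Your proof is correct and is the standard argument: linearity of expectation plus $E[X_i]=p$ gives $E[\bar{X}]=p$, which is exactly the unbiasedness condition. The paper itself offers no proof of this theorem --- it simply cites the textbook reference --- so there is nothing to diverge from; your observation that independence is not needed for unbiasedness (only for the variance computations used later in the normal approximation) is accurate and a nice touch.
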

\begin{sloppypar}
By Theorem~\ref{thmBern}, $\rho = \frac{\kappa}{2}$ has an unbiased estimator $\hat{\rho}={\widehat{\kappa /2} = \frac{\sum_{i=1}^r D_i}{r} = \frac{C_\rho}{r}}$. Therefore, $\kappa$ can be estimated by 
\begin{equation}\label{eqEstKappa}
	{\hat{\kappa} =  \frac{2C_\rho}{r}}\mbox{ .}
\end{equation}
\end{sloppypar}
Remark~\ref{rmkBinAppr} states that $C_\rho = \sum_{i=1}^{r}D_i \sim B(r,\frac{\kappa}{2})$ approximately follows a normal distribution $N\left(\frac{\kappa r}{2}, \frac{1}{2}r\kappa(1-\frac{\kappa}{2})\right)$.  Combining with Fact~\ref{factNormalFact}, we conclude that $\widehat{\kappa /2} =  \frac{C_\rho}{r} \sim N\left(\frac{\kappa}{2}, \frac{\kappa(1-\frac{\kappa}{2})}{2r}\right)$. Applying Fact~\ref{factNormalFact} again, we have ${\widehat{\kappa} \sim N\left(\kappa, \frac{2\kappa(1-\frac{\kappa}{2})}{r}\right)}$.

\subsection{Rate difference monitoring}\label{secRevProt}
When $\omega$ is unknown, we need to estimate it before starting the execution of the protocol. We have to assume that during this estimation, there is no attack. Under this assumption, $\kappa$ is reduced to $\omega$, the probability that a probing bit is disturbed. Correspondingly, $\hat{\kappa}$ is reduced to an estimator of $\omega$. Namely, $\kappa = \omega$ and $\hat{\kappa} = \hat{\omega}$. As we have shown ${\widehat{\kappa} \sim N\left(\kappa, \frac{2\kappa(1-\frac{\kappa}{2})}{r}\right)}$, we also have $\hat{\omega}\sim N\left(\omega, \frac{2\omega(1-\frac{\omega}{2})}{s}\right)$, where $s$ is the number of probing bits for estimating $\omega$. Let $C'_{\rho}$ denote the number of \glspl*{pm} under the assumption that the probing bits are not attacked. By replacing $C_{\rho}$ by $C'_{\rho}$ and $s$ by $r$ in equation~(\ref{eqEstKappa}), we get,
\begin{equation}\label{eqEstOmega}
	{\hat{\omega} =  \frac{2C'_{\rho}}{s}}\mbox{ .}
\end{equation}
In \gls*{renk}, the attack detection method is implemented by checking that $\kappa=\omega$. After getting the estimations of $\kappa$ and $\omega$, let $e$ denote their difference, which is an outcome of \gls*{rv} $E=\hat{\kappa}-\hat{\omega}$. Fact~\ref{factNormalAdd} states that $E$ still follows a normal distribution. In particular, $E\sim N\left(\kappa-\omega, \frac{2\kappa(1-\frac{\kappa}{2})}{r}+ \frac{2\omega(1-\frac{\omega}{2})}{s}\right)$. 
Under the null hypothesis $H_0$ that there is no attack, $\kappa =\omega$. Then $E\sim N\left(0,2\nu\left(1-\frac{1}{2}\nu\right)\left(\frac{1}{r}+ \frac{1}{s}\right)\right)$, where $\nu = \kappa=\omega$ and can be estimated by $\hat{\nu}=\frac{2(C'_{\rho} + C_{\rho})}{r+s}$. So, if $H_0$ is true, the distribution of \gls*{rv} $E$ is condensed near zero. Although the set of the possible outcomes of $E$ is $\mathbb{R}$, the test can rule out outcomes that are much greater than zero without introducing much error (note that we do not consider a negative difference because $\kappa$ is, theoretically, not less than $\omega$. In other words, the alternative hypothesis $H_1$ is $\kappa>\omega$).  Let $\alpha$ denote the probability that an outcome of $E$ is much greater than zero and ruled out by the test. We can test $H_0$ against $H_1$ by rejecting $H_0$ if we observe an outcome of $E$ greater than $e_\alpha$, where $e_\alpha\in \mathbb{R}$ such that $Pr[E>e_\alpha]=\alpha$. In other words, the protocol is considered insecure if $e$, the difference between the estimations of $\kappa$ and $\omega$, is greater than $e_\alpha$. 

The arduous calculation of $e_\alpha$ can be avoided if we scale $E$ to

\begin{equation}\label{eqCalcZ}
	Z=\frac{E-0}{\sqrt{2\nu\left(1-\frac{1}{2}\nu\right)\left(\frac{1}{r}+ \frac{1}{s}\right)}}=\frac{\hat{\kappa}- \hat{\omega}}{\sqrt{2\hat{\nu}\left(1-\frac{1}{2}\hat{\nu}\right)\left(\frac{1}{r}+ \frac{1}{s}\right)}}, 
\end{equation}
a standard normal distribution according to Fact~\ref{factStdNormal}. So correspondingly, the difference $e$ after scaling (denoted by $z$) is an outcome of $Z$. Then an equivalent test can be made by rejecting $H_0$ if $z>z_\alpha$ where $z_\alpha\in\mathbb{R}$ such that $Pr[Z>z_\alpha] = \alpha$. The table listing the value of $z_\alpha$ as a function of $\alpha$ can be found in Reference~\citep{hogg2014probability}. Therefore, we amend the original \gls*{nk} protocol as follows:
\begin{protocol}[\gls*{renk}]\label{protocolRev}
\
\begin{enumerate}[label=RC\arabic*]
		\item (Estimation of $\omega$) Alice and Bob execute \gls*{mrad} $s$ times.  Alice sends $s$ \glspl*{qb} to Bob. He reflects all of them. In other words, there are $s$ probing bits and zero data bits. In Step~\ref{NKAliceCheck}, Alice counts the number of \glspl*{pm}  (denoted by $C'_{\rho}$). Finally, she uses equation~(\ref{eqEstOmega}) to estimate $\omega$. Note that during the estimation process, we need to guarantee that Eve does not perpetrate attacks.
		\item\label{nkrStart} Alice and Bob start the execution of the protocol. They do Steps~\ref{NKAliceSend}-\ref{NKBobNotify}. 
		\item\label{nkrmeas} In \ref{NKAliceCheck}, instead of terminating the protocol when $p_k = 0$ and function $AE(e_{1k},e_{2k},i_k) = 1$, Alice increments a counter $C_\rho$ (initial value is zero) and continues to check the remaining bits of $\undertilde{P}$. After finishing checking, she uses equation~(\ref{eqEstKappa}) to estimate $\kappa$. We test the null hypothesis  $H_0: \kappa = \omega$ against the alternative hypothesis $H_1: \kappa > \omega$, equation~(\ref{eqCalcZ}). If $H_0$ is rejected, Alice considers the protocol is insecure and terminates it; otherwise, Alice and Bob execute Steps~\ref{NKAliceFetch} and \ref{NKCorrect} to complete the data transmission.
	\end{enumerate}
\end{protocol}

When $\omega$ is given, we can simply compare it with the estimation of $\kappa$. Similarly, we need to test $H_0: \kappa = \omega$ against $H_1: \kappa > \omega$. Since  $\omega$ is not estimated but a given constant, we can say $\hat{\omega} \sim N(\omega, 0)$. We estimate the real attack rate $\kappa$ by equation~(\ref{eqEstKappa}). Applying Fact~\ref{factNormalAdd}, we have that $E=\hat{\kappa} - \omega = \hat{\kappa} - \hat{\omega} \sim N\left(\kappa-\omega, \frac{2\kappa(1-\frac{\kappa}{2})}{r}\right)$. Under the assumption that $H_0$ is true, $E\sim N\left(0, \frac{2\kappa(1-\frac{\kappa}{2})}{r}\right)$. Applying Fact~\ref{factStdNormal}, we scale $E$ to $Z' = \frac{(\hat{\kappa} - \omega)-0}{\sqrt{\frac{2\kappa(1-\frac{\kappa}{2})}{r}}}= \frac{\hat{\kappa} - \omega}{\sqrt{\frac{2\kappa(1-\frac{\kappa}{2})}{r}}}\sim N(0,1)$. Let $z'$ denote the scaled difference of the estimated $\kappa$ and the pre-known $\omega$, which is an outcome of $Z'$. We reject $H_0$ if $z' > z_\alpha$, where the definition of  $z_\alpha$ is unchanged. 

Since $\omega$ is given, its estimation is unnecessary. To complete the data transmission, Alice and Bob only need to implement Steps~\ref{nkrStart} and \ref{nkrmeas}, where $Z$ is replaced by $Z'$.

\subsection{Security analysis of \gls*{renk}}
As we have mentioned at the beginning of this section, the original \gls*{mrad} fails if the \glspl*{qb} transmitted are disturbed and the entanglement is destroyed. The false positives mislead the protocol about the transmission security and cause wrong termination. To fix the problem, in Sections~\ref{secRevProt}, we propose a new detection method for \glspl*{attackmr} based on a statistical test. The method detects the  discrepancy between $\omega$ and $\kappa$, which does not exist if there is no attack. If any significant discrepancy is identified, the protocol is considered insecure and terminated.

The test rules out the possible outcomes of $E$ that are largely greater than zero, and thus, introduces detection errors. In more details, suppose that Eve does not perpetrate attacks, which implies $\kappa = \omega$ and the null hypothesis $H_0$ is true. Due to the fluctuation of the estimator $D$, the difference between $\hat{\kappa}$ and $\hat{\omega}$, there is a proabability $\alpha$ that the sampling of $D$ is greater than the threshold $d_\alpha$ and gets the $H_0$ rejected, which is a false positive. Correspondingly, if Eve perpetrates attacks and causes $\kappa>\omega$, it is also possible that $H_0$ is not rejected since their difference is still less than $d_\alpha$, which is a false negative. We formally define these two types of errors as follows, 

\begin{definition}[Type A Error - False Negative]
	Eve perpetrates an attack, but the protocol is wrongly considered secure.
\end{definition}

\begin{definition}[Type B Error - False Positive]
	Eve does not perpetrate an attack, but the protocol is wrongly considered insecure.
\end{definition}
The Type A Error has more adverse consequences than the Type B Error because Eve can eavesdrop the message without the awareness of  Alice and Bob. We show that the probability of undetected eavesdropping is very low, even when a small number of probing bits is used. The Type B error does not undermine the security of the protocol. Instead, it lowers the transmission efficiency. While Eve does not perpetrate an attack, the Type B error causes a wrong belief of its presence and a termination of the protocol. The protocol needs to restart and resend all \glspl*{qb}. The transmission efficiency is affected. 

The choice of a specific value for $\alpha$, the occurrence probability of the Type B Error, affects the one of the Type A Error. In particular, an increase of $\alpha$ pushes the value of $d_\alpha$ to zero. Although Eve only attacks a few portion of the probing bits, the difference between $\kappa$ and $\omega$ she introduces may still exceed the lowered $d_\alpha$ and get $H_0$ rejected; therefore, the test becomes stricter and the occurrence probability of  Type A Error decreases. Similarly, we can show that a decrease of $\alpha$ leads to an increase of Type A Error occurrence probability. Since the Type A and B Error occurrence probabilities have a negative relationship, if we increase $\alpha$ to enhance the security level, we get more Type B Errors and lower transmission efficiency. Conversely, to decrease the overhead, security is undermined. 

\begin{figure*}[h]
	\centering
    \includegraphics[width=0.8\columnwidth]{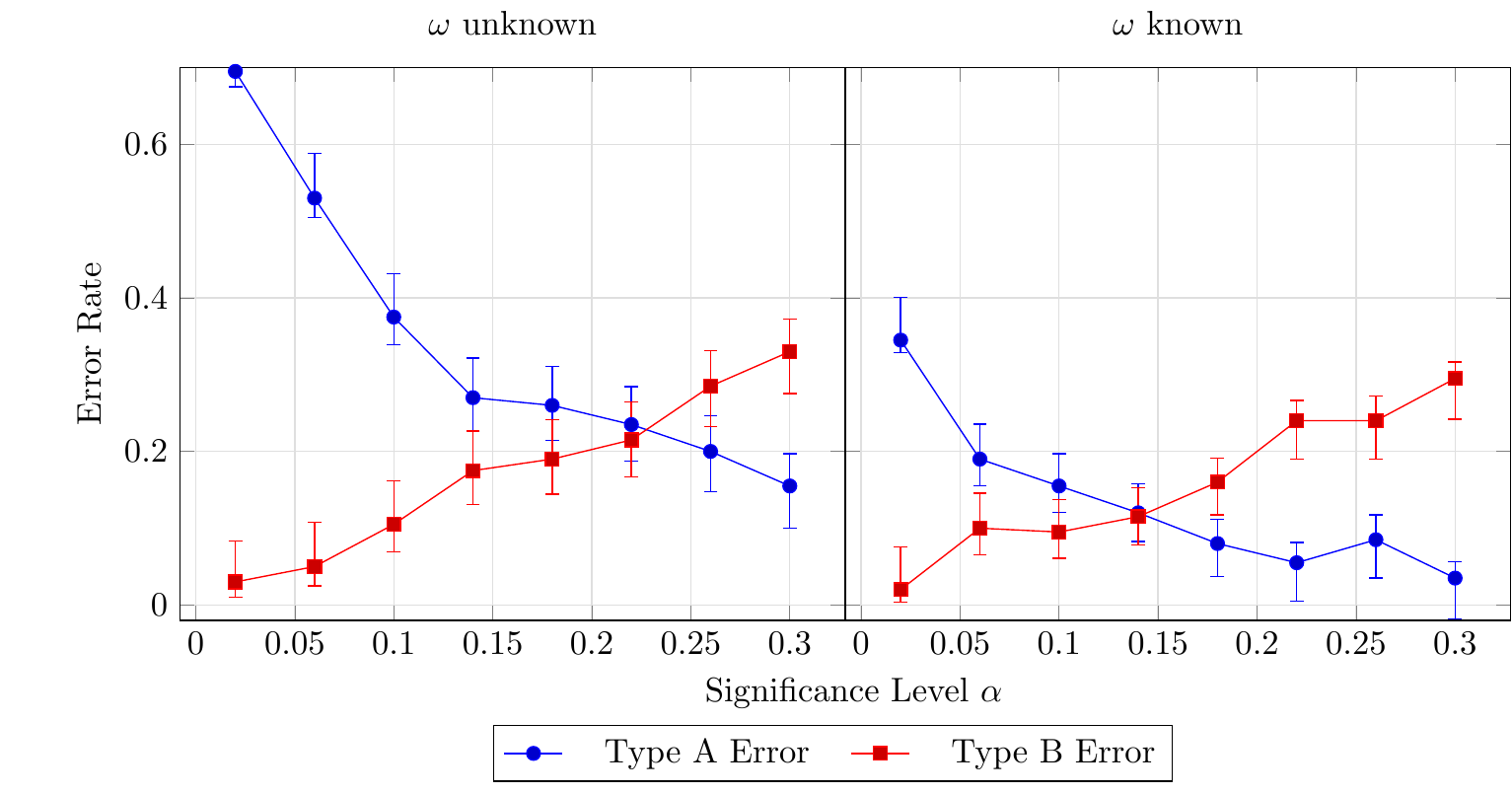}
	\caption{Error rates and their $95\%$ confidence intervals as a function of significance level when the probability $\omega$ that a \gls*{qb} is disturbed is unknown. ($\omega = 0.3$) is unknown~(left) and known~(right). (Simulation configuration: $r = 600$, $s = 600$ (if $\omega$ is unknown), $\omega = 0.3$, $p = 0.1$)}\label{figAlpha}
\end{figure*}

\begin{figure*}[h!]
	\centering
	\includegraphics[width=0.75\columnwidth]{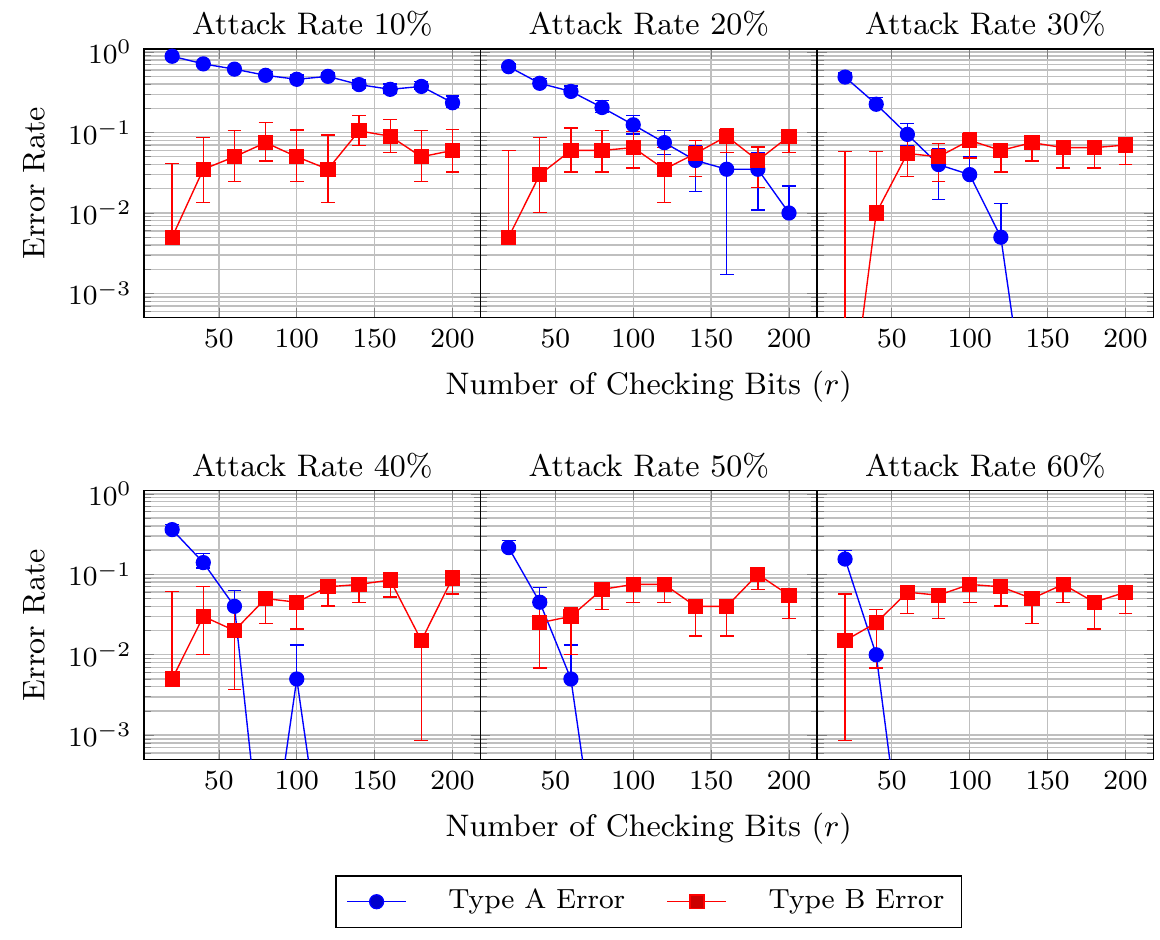}
    \caption{Error rates and their $95\%$ confidence intervals with respect to the number of probing bits when Eve has probability $p = 0.1$, $0.2$, ..., $0.6$ to attack a \gls*{qb}. (The probability $\omega$ that a \gls*{qb} is disturbed is unknown. Simulation configuration: $\omega = 0.05$, $\alpha = 0.05$)}\label{figDErrCheckU}
\end{figure*}

\begin{figure*}[h!]
	\centering
    \includegraphics[width=0.75\columnwidth]{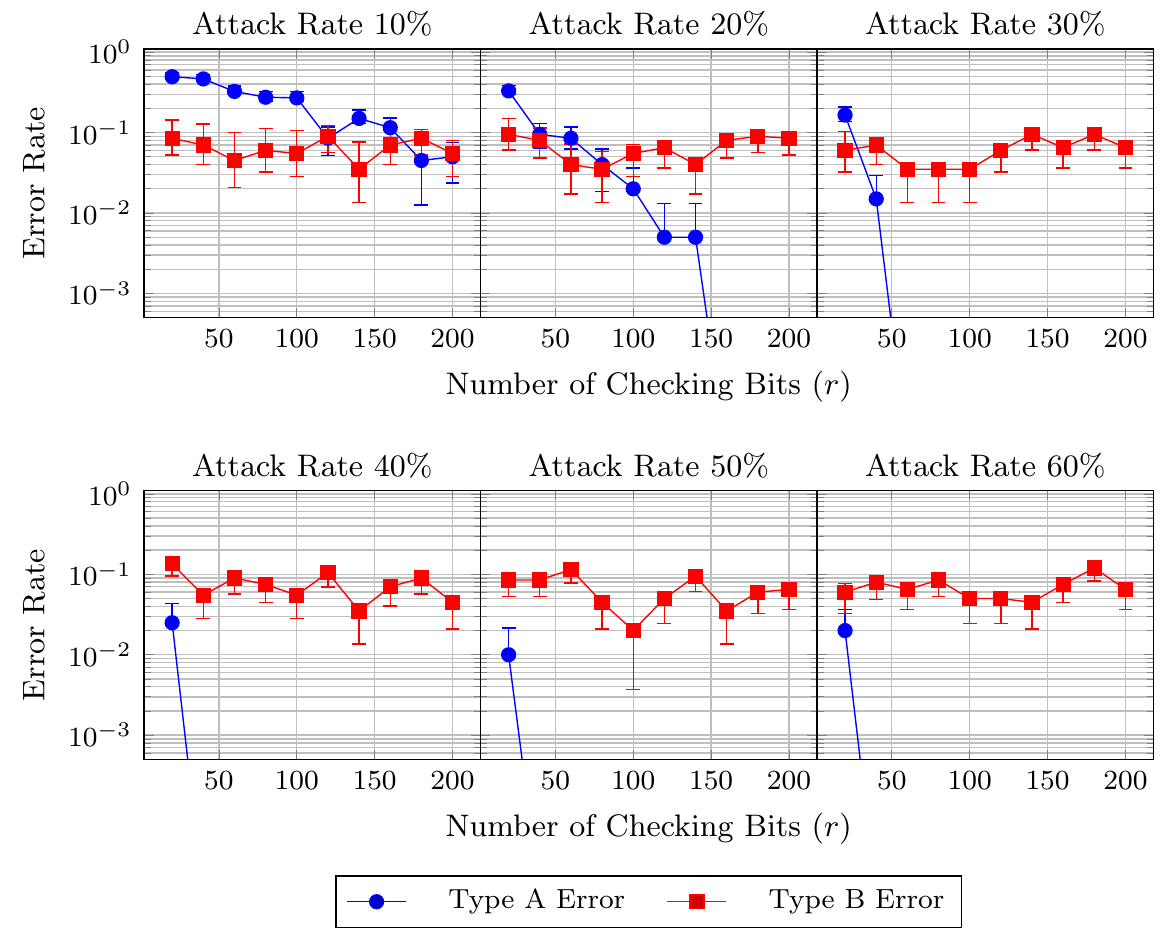}
    \caption{Error rates and their $95\%$ confidence intervals with respect to the number of probing bits when Eve has probability $p = 0.1$, $0.2$, ..., $0.6$ to attack a \gls*{qb}. (The probability $\omega$ that a \gls*{qb} is disturbed is known. Simulation configuration: $\omega = 0.05$, $\alpha = 0.05$)}\label{figDErrCheckK}
\end{figure*}

With the results of simulations, Figure~\ref{figAlpha} plots the rates of the two types of errors as a function of $\alpha$. For estimating the Type A Error occurrence probability, we set the probability $p$ for Eve to attack a \gls*{qb} to $10\%$ for both cases, and the number of probing bits to estimate $\kappa$ and $\omega$ (if unknown) to $600$. Note that the configuration here is intended to make the Type A Error occurrence probability more sensitive to the choice of alpha, which is not typical in practical problems. We will talk about how the error occurrence probabilities behave with more common configurations in the sequel.  Whether the value $\omega$ is known or not, the trends for both types are consistent with our analysis. When $\omega$ is given, the probability of the Type A Error is lower because the estimation of $\omega$ introduces more variance, which further amplifies the fluctuation of the estimation of the difference $D = \kappa -\omega$. Regarding the Type B Error, we can observe that the rate roughly equals $\alpha$ which makes sense since it is an estimation of it. 

Besides the occurrence probability $\alpha$ of the Type B Error, the numbers of probing bits required to estimate $\omega$ and $\kappa$ are also related to the transmission efficiency. A larger number of probing bits contributes to a better estimation, but also has higher overhead. With the results of simulations, Figures~\ref{figDErrCheckU} and \ref{figDErrCheckK} plot the rates of the Type A and B Errors as a function of the number of probing bits and the attack rate. The probability (${ \omega = 0.05}$) that a \gls*{qb} is disturbed is unknown in Figure~\ref{figDErrCheckU} but pre-known in Figure~\ref{figDErrCheckK}. $\alpha$ is set to $0.05$. In Figure~\ref{figDErrCheckU}, the numbers of probing bits coincide for the estimation of $\omega$ and $\kappa$.

The two figures show that when Eve is more likely to attack a \gls*{qb}, the detection success rate increases. If Eve only attacks a small proportion of \glspl*{qb}, her attacks do not significantly increase $\kappa$ and thus are concealed by $\omega$. However, in order to successfully eavesdrop messages, Eve should perpetrates attacks at a rate higher than $50\%$. When the probability of attacks is $60\%$, $60$ probing bits are sufficient to avoid the Type A Error (when $\omega$ is unknown). If $\omega$ is given, then $40$ probing bits can achieve the same security level.

Note that the Type B Error rate should be constant. In particular, its mean is theoretically equal to $5\%$ as it is an estimation of $\alpha$. However, while the estimated rate roughly stays around $5\%$ in Figure~\ref{figDErrCheckK}, a relatively considerable increase is observed in Figure~\ref{figDErrCheckU}. The increase is due to a low number of probing bits. According to Remark~\ref{rmkBinAppr}, a good normal approximation requires a large sample size and to estimate both $\omega$ and $\kappa$, a even larger one is needed. Although, the approximation is not quite accurate when the probing bit number is small, a low level of Type A Error rate shows that it is good enough to secure the protocol.

Since the \gls*{renk} and \gls*{nk} protocols are the same except for the part that detects \glspl*{attackmr}, Theorem~\ref{thmNkSecure} is also applicable to \gls*{renk}. In particular, we have Theorem~\ref{thmNoAErr}. 
\begin{theorem}\label{thmNoAErr}
	With a sufficient number of probing bits, the Type A Error can be avoided. So the  \gls*{renk} protocol is secure.
\end{theorem}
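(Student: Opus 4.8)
The plan is to prove that the probability of a Type A Error can be made smaller than any prescribed bound by choosing enough probing bits, and then to deduce security from Theorem~\ref{thmNkSecure}. Fix an attack probability $p>0$ for Eve, and let $\beta_{r,s}$ be the probability that, under $H_1$, the test of Step~\ref{nkrmeas} fails to reject $H_0$; this event is exactly a Type A Error, so it is enough to show $\beta_{r,s}\to 0$ as $r$ and $s$ grow (or as $r$ grows alone, when $\omega$ is known).

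First I would quantify the gap the test must resolve. A probing bit is disturbed with probability $\omega<1$ and, independently, attacked with probability $p>0$, so it is disturbed-or-attacked with probability $\kappa=1-(1-\omega)(1-p)=\omega+p(1-\omega)$; hence $\Delta:=\kappa-\omega=p(1-\omega)>0$ is a fixed constant, independent of $r$ and $s$. By Theorem~\ref{thmKappaEst} the counts $C_\rho\sim B(r,\kappa/2)$ and $C'_\rho\sim B(s,\omega/2)$ are \emph{exactly} binomial, so $E=\hat\kappa-\hat\omega=2C_\rho/r-2C'_\rho/s$ has mean exactly $\Delta$ and variance exactly $\sigma_{r,s}^2=2\kappa(1-\kappa/2)/r+2\omega(1-\omega/2)/s$, which tends to $0$. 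No normal approximation is needed for this part.

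Next I would locate the rejection threshold. Under $H_0$ one has $\kappa=\omega=:\nu$, so $E$ has mean $0$ and variance $\sigma_0^2=2\nu(1-\nu/2)(1/r+1/s)$; by Chebyshev's inequality the upper-$\alpha$ quantile $e_\alpha$ of $E$ under $H_0$ satisfies $e_\alpha\le\sigma_0/\sqrt{\alpha}\to 0$ (in the paper's normal-approximation implementation $e_\alpha=z_\alpha\sigma_0$, with the same conclusion). For the scaled statistic of equation~(\ref{eqCalcZ}) the one delicate point is the random plug-in $\hat\nu$: under $H_1$ it estimates a weighted average of $\omega$ and $\kappa$ rather than a single parameter, so $Z$ is not exactly standard normal; since $\hat\nu$ stays within a bounded sub-interval of $(0,1)$ the denominator still tends to $0$ in probability, but the clean route is to run the whole argument on the unscaled $E$, whose first two moments are known exactly.

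Finally, choose $r$ and $s$ large enough that $e_\alpha<\Delta/2$. Then a Type A Error forces $E\le e_\alpha<\Delta/2$, hence $|E-\Delta|>\Delta/2$, and Chebyshev gives $\beta_{r,s}\le Pr[\,|E-\Delta|>\Delta/2\,]\le 4\sigma_{r,s}^2/\Delta^2\to 0$. Thus any target security level is reached with a finite number of probing bits; that is, the Type A Error is avoided. To finish, \gls*{renk} coincides with \gls*{nk} except in the procedure that detects \glspl*{attackmr}, and ``Type A Error avoided'' is precisely the hypothesis ``\glspl*{attackmr} can always be detected'' of Theorem~\ref{thmNkSecure}, which then yields that \gls*{renk} is secure. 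The main obstacle is the random plug-in $\hat\nu$ in the scaled test statistic under $H_1$; the resolution is to argue with the exactly characterised $E$ together with a concentration (Chebyshev) bound rather than with $Z$.
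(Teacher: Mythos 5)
Your proposal is correct, and it is considerably more explicit than what the paper itself offers. The paper states Theorem~\ref{thmNoAErr} with no proof environment at all: the first clause (``the Type A Error can be avoided with sufficient probing bits'') is justified only by the simulation results of Figures~\ref{figDErrCheckU} and~\ref{figDErrCheckK} and the surrounding informal discussion of how the error rates behave, and the second clause by the preceding remark that \gls*{renk} differs from \gls*{nk} only in the attack-detection step, so Theorem~\ref{thmNkSecure} carries over. You share that final reduction, but you replace the empirical half with an analytic concentration argument: the separation $\Delta=\kappa-\omega=p(1-\omega)>0$ is a constant independent of $r$ and $s$, the unscaled statistic $E$ has exactly computable mean $\Delta$ and vanishing variance via the binomial laws of Theorem~\ref{thmKappaEst}, the rejection threshold tends to $0$, and Chebyshev bounds the miss probability by $4\sigma_{r,s}^2/\Delta^2\to 0$. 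Your choice to argue on $E$ rather than the studentized $Z$ of equation~(\ref{eqCalcZ}) is a genuine improvement, since it sidesteps the fact that under $H_1$ the plug-in $\hat\nu$ estimates a mixture of $\omega$ and $\kappa$ and $Z$ is not standard normal --- an issue the paper never confronts. What your route buys is an actual proof with explicit (if loose) sample-size bounds; what the paper's route buys is the concrete operating points ($60$ or $40$ probing bits) that a Chebyshev bound would grossly overestimate. Two small caveats: your formula for $\kappa$ silently assumes that disturbance and attack are independent events (the paper is no more explicit on this), and strictly speaking the Type A Error probability is driven below any target rather than literally to zero --- but that is the same license the theorem statement itself takes.
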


\section{Quantum Resource Requirements and Transmission Efficiency}\label{secEvaluation}
In this section, we analyze the requirements of quantum resources, \gls*{qb} efficiency and quantum circuit complexity of the \gls*{nk} protocol. Among all \gls*{sqkd} and \gls*{sqdc} protocols, we show that the \gls*{nk} protocol has the highest \gls*{qb} efficiency (almost $100\%$) with the simplest quantum circuits (without \glspl*{qb} permutation and measurement basis switch). Comparing to the protocols utilizing the quantum entanglements, we show that the \gls*{nk} protocol reaches the theoretical minimum of the \gls*{qbr} size and the \gls*{ept} among the \gls*{sqkd} and \gls*{sqdc} protocols.

\subsection{Quantum resources requirements}\label{secQuanRec}
We briefly discuss the quantum resources requirements at the end of Section~\ref{nk}. In this section, we elaborate them. Since the revised version introduced in Section~\ref{secRev} does not change the hardware requirements, we discuss them together.

Alice and Bob loop over $i_k$ in $I$ to accomplish all the operations requiring quantum resources.  Hence, the lowest quantum resources requirements of the protocol implementation agrees with the one to execute a single quantum procedure group $QPG_k$ (plotted in Figure~\ref{figqpk}).

\begin{figure}[!h]
	\centering
	\includegraphics[width=0.8\columnwidth]{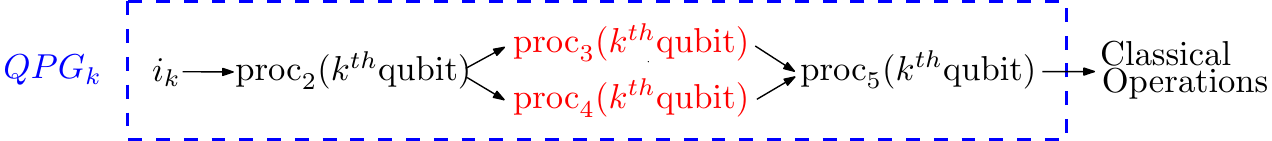}
	\caption{The \gls*{nk} protocol flow diagram of the procedures requiring quantum resources. \textit{The procedures marked in red belong to Bob. For each $QPG_k$, Bob chooses either $proc_3(k^{th} \mbox{ \gls*{qb}})$ to measure or $proc_4(k^{th} \mbox{ \gls*{qb}})$ to reflect}.}\label{figqpk}
\end{figure}
In order to generate EPR pairs in Procedure~\ref{OAS}, Alice is required to have an EPR pair generation circuit. Besides, a one-bit \gls*{qbr} is needed to retain the first \gls*{qb} of the pair. Alice needs a Bell measurement circuit to perform Procedure~\ref{OAM}. The entanglement of the EPR pair must be preserved until Alice applies a Bell measurement on it (this case happens when the second \gls*{qb} $\gamma_B$ is reflected by Bob. Otherwise, Bob measures it, and the preservation time is shorter). Let $C$ denote the time that Alice generates, sends and receives the \glspl*{qb}, and $T$ the one-way time for the \glspl*{qb} to move between Alice and Bob. Then the \gls*{ept} is $C+2T$ if we do not count the \glspl*{qb} reflection time of Bob. Compared to Alice, the quantum capability of Bob is fundamental. In particular, he should be able to either measure a \gls*{qb} in the Z-basis, followed by sending a pre-generated $\ket{0}$, or reflect it. Overall, Alice and Bob need the following minimum quantum capabilities for the protocol implementation. 

\noindent\textbf{Alice:} a one-bit \gls*{qbr}, the circuits for Bell measurement and EPR pair generation.\\
\noindent\textbf{Bob:} a device that either reflects a \gls*{qb} or uses the Z-basis to measure it  followed by sending a $\ket{0}$.

Among \gls*{sqdc} and \gls*{sqkd} protocols that utilize entanglements, Alice must create at least a pair of entangled \glspl*{qb} and send at least one of the \glspl*{qb} to Bob. Therefore, for containing a \gls*{qb}, a one-bit \gls*{qbr} is necessary. For checking the potential attacks, Alice must do some quantum operations on the \gls*{qb} pair consisting of the \gls*{qb} she retained and the one reflected by Bob. So, the \gls*{ept} is at least $C+2T$. As the \gls*{nk} reaches the theoretical lower bound, we conclude that,
\begin{theorem}
	Among \gls*{sqdc} and \gls*{sqkd} protocols that utilize entanglements, the \gls*{nk} protocol only requires theoretically minimal \gls*{qbr} size and \gls*{ept}. 
\end{theorem}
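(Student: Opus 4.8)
The plan is to prove optimality by a sandwich argument: first recall the resources that \gls*{nk} actually consumes, then show that no entanglement-based \gls*{sqdc} or \gls*{sqkd} protocol can do strictly better, so the values \gls*{nk} attains are the theoretical minima. The upper bound is already in hand from the preceding discussion: in \gls*{nk}, Alice needs only a one-bit \gls*{qbr} to hold $\gamma_{kA}$ while $\gamma_{kB}$ makes its round trip, and the worst-case \gls*{ept} (the case in which Bob reflects, so that the partner qubit traverses the channel twice) is exactly $C+2T$, where $C$ is Alice's local generate/send/receive time and $T$ the one-way channel delay. These are the two quantities to be matched from below.

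Next I would establish the lower bound by isolating the structural features shared by every protocol in the class. By definition such a protocol has the quantum-capable party Alice prepare at least one entangled pair and transmit at least one member of it to Bob; otherwise the entanglement carries no communicative role. Merely holding the untransmitted member already forces a \gls*{qbr} of size at least one. For the \gls*{ept} bound I would argue that attack detection is impossible unless Alice eventually performs a joint quantum operation — a Bell-type measurement or verification — on her retained qubit together with a qubit that has returned through the channel: a check confined to her own half reveals nothing about what happened in transit. Hence the retained qubit's entanglement must persist at least as long as it takes the partner qubit to reach Bob and come back, namely $2T$, plus the unavoidable local handling time absorbed into $C$, giving \gls*{ept} $\ge C+2T$.

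Finally I would combine the two bounds: \gls*{nk} achieves a one-bit \gls*{qbr} and an \gls*{ept} of $C+2T$, and no protocol that utilizes entanglement can use fewer than one \gls*{qbr} nor preserve entanglement for less than $C+2T$, so both quantities are simultaneously minimal and \gls*{nk} is optimal. The delicate step is the lower bound: one must argue that ``utilizing entanglement'' genuinely entails both transmitting a qubit and later re-examining it jointly with a retained qubit, rather than exploiting entanglement in some more exotic configuration. I would handle this by appealing to the common send-and-verify template of the entanglement-based \gls*{sqkd}/\gls*{sqdc} protocols surveyed in the introduction (Jian et al., Luo--Hwang, Almousa--Barbeau), all of which conform to exactly this pattern, so that $C+2T$ and a one-bit register are the correct notion of ``theoretical minimum'' for the class under discussion. \endproof
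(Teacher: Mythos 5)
Your proposal is correct and follows essentially the same route as the paper: achievability from the one-bit \gls*{qbr} and $C+2T$ \gls*{ept} established in the preceding resource analysis, plus a lower bound arguing that any entanglement-utilizing protocol must retain one member of a pair (forcing a \gls*{qbr} of size one) and must jointly examine the retained qubit with the returned one (forcing the entanglement to survive the round trip). Your explicit flagging of the ``send-and-verify template'' assumption is a point the paper leaves implicit, but the substance of the argument is the same.
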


\subsection{Transmission efficiency}
Suppose the string $\undertilde{C}$ $(=\undertilde{U})$ shared by Alice and Bob is updated for each dialogue. In other words, prior to applying Step~\ref{NKAliceFetch} and Step~\ref{NKCorrect} to transmit data, Alice and Bob always execute Steps~\ref{NKAliceSend} to \ref{NKAliceCheck} to share a random binary string. 

Consider the original \gls*{nk} protocol. Suppose that an $s$-bit message is sent to Bob from Alice, and for each \gls*{qb} Eve has probability $p$ to perpetrate an \gls*{attackmr}. Then, for a high eavesdropping efficiency, Eve has to choose a $p$ close to one. Assume $p = 0.6$ and the number of probing bits is $15$. Theorem~\ref{thmDetectMRAp} shows that the success rate of detection is higher than $0.995$. Moreover, adding a few more probing bits can enhance the security level significantly. In practice, the message length $s$ should considerably exceed $15$. Then the probing bits can only introduce a negligible overhead. So for sending a message of length $s$, Alice sends roughly $s$ qubits to Bob. Thus, the \gls*{qb} efficiency approaches $100\%$.

 If we consider the possibility that a \gls*{qb} is disturbed during the \gls*{qb} transmission, we need to apply the rate estimation version of the protocol. If $\omega$ is unknown, we need around $60$ probing bits to reach $99\%$ detection success rate (assuming that $\alpha = 0.01$, $\omega = 0.05$), which is acceptable considering a much larger total number of \glspl*{qb} transmitted. If the rate is given, then the probing bit number can decrease to $30$ (extra $20$ probing bits can improve the success rate to almost $100\%$). Then the overhead from the probing bits is negligible. The major part of the overhead is from  $\alpha$, the probability to get Type B Error, which causes a full restart of the protocol. In average, $\alpha\cdot 100\%$ \glspl*{qb} transmitted are discarded due to the wrong conclusion that the protocol is insecure. If we choose $\alpha = 0.01$ (which is big enough to secure the protocol), the overhead is only $1\%$.

Notice that Alice is only required to perform Bell measurements. So she only needs a fixed circuit without measurement basis switch capability. Additionally, the operations related to one \gls*{qb} is irrelevant to those concerned with the others (since the message security does not depend on the bit permutation by Alice and Bob). Therefore, if the transmission or measurement of a single \gls*{qb} fails, Alice and Bob only need to re-implement the operations associated with that \gls*{qb}. This enhances the success rate and efficiency of the data transmission potentially.

In Table~\ref{tableCompare}, we make a detailed comparison with other typical \gls*{sqkd} and \gls*{sqdc} protocols. Note that the \gls*{qb} efficiency ($\eta$) is calculated by $$\eta=\frac{\mbox{Length of the message}}{\mbox{Number of \glspl*{qb} sent by Alice}}. $$ For the protocols in References \citep{PhysRevA.79.032341,0256-307X-28-10-100301,PhysRevA.79.052312,li2016}, $\eta$ depends on some parameters other than the length of the message. For these protocols, we give an upper bound for $\eta$. Regarding \gls*{renk}, $\eta$ is calculated by choosing $\alpha = 0.01$. Besides, in the protocol proposed by Li et al. \citep{li2016}, the measurement basis switch is not required of Alice or Bob but delegated to a third full quantum capability computer Charlie.

\section{Conclusion}\label{secConclusion}
In this paper, we proposed a new \gls*{sqdc} protocol (named \acrlong*{nk}). Compared to other \gls*{sqdc} and \gls*{sqkd} protocols, our new protocol has much higher \gls*{qb} efficiency (almost $100\%$) and simpler quantum circuits (not requiring switching measurement basis or permuting  \glspl*{qb}). While other \gls*{sqkd} and \gls*{sqdc} protocols encrypting messages through entanglements require at least linear \gls*{ept} and linear size \gls*{qbr}, in our protocol, only Alice is required to have a fixed size (as low as one) \gls*{qbr} and preserve an EPR pair entanglement for time $C+2T$, where $C$ is the time that Alice prepares, receives and measures the \glspl*{qb}, and $T$ is the one for the \glspl*{qb} to move between Alice and Bob. 

Among the protocols using quantum entanglements to encrypt messages, we show that both \gls*{qbr} size and \gls*{ept} achieve the theoretical minimums. A pre-shared key is not required by our new protocol. Instead, Alice and Bob use the \glspl*{qb} entanglement to share a random string and further use it as a key to secure the data transmission. We used the probing bits to implement \gls*{mrad} so that the protocol is resistant to \glspl*{attackmr}. 

In our original protocol, Theorem~\ref{thmDetectMRAp} shows that $15$ probing bits can lead to a $0.995$ success rate of attack detection (given that the adversary Eve has the probability $0.6$ of perpetrating an \gls*{attackmr} on a single \gls*{qb}). A few more bits can boost the security level of the protocol significantly (for example, $0.9992$ detection success rate can be achieved by using $20$ probing bits). If the message size is sufficiently long, then the \gls*{qb} efficiency can reach almost $100\%$.

The rate estimation version, the protocol \gls*{renk}, can function properly and correctly detect attacks perpetrated by Eve while the \glspl*{qb} may be disturbed during the transmission. We designed a test to monitor the difference of $\kappa$, the probability that a \gls*{qb} is disturbed or attacked, and $\omega$ (estimated or pre-known), the probability that a \gls*{qb} is disturbed. If the difference is significantly large, the protocol terminates. The simulation results show that $60$ probing bits can push detection success rate to almost $100\%$ (assuming that $\alpha = 0.05$ and $\omega = 0.05$) if $\omega$ is unknown.  The number of probing bits can decrease to $40$ and achieve the same success rate if $\omega$ is pre-known. Assuming that we can always detect \glspl*{attackmr}, our protocol is secure against network attacks (Theorems~\ref{thmNkSecure} and \ref{thmNoAErr}).

\bibliographystyle{spmpsci}      
\bibliographystyle{spphys}    
\bibliography{Bibliography}

\section*{Author Contributions}
H.L devised the protocol. H.L and M.B wrote the main manuscript, and A.N gave solid suggestions on the manuscript. All authors reviewed the manuscript.
\section*{Additional Information}
\textbf{Competing interests:} The authors declare no competing interests.

% Table 1
\begin{table}[!h]
\centering
\caption{Bell measurements on Bell states.}\label{tableBMBS}
\begin{tabular}{||c|c||c|c||}
			\hline
			  $\gamma_A\gamma_B$  &   Output $(e_1e_2)$   &   $\gamma_A\gamma_B$   &   Output $(e_1e_2)$   \\\hline\hline

            $\ket{\Phi^+}$ & $00$ & $\ket{\Phi^-}$ & $10$ \\\hline
            $\ket{\Psi^+}$ & $01$ & $\ket{\Psi^-}$ & $11$ \\\hline
        \end{tabular}
\end{table}

% Table 2
\begin{table}[!h]
\centering
\caption{Bell measurement results of $\ket{00}$, $\ket{01}$, $\ket{10}$ and $\ket{11}$; there is $0.5$ possibility for each output.}\label{tableBMNBS}
\begin{tabular}{||c|c||c|c||}
			\hline
			  $\gamma_A\gamma_B$  &   Output $(e_1e_2)$   &   $\gamma_A\gamma_B$   &   Output $(e_1e_2)$   \\\hline\hline
            \multirow{2}{*}{$\ket{00}$} & $00$ & \multirow{2}{*}{$\ket{01}$} & $01$ \\
            \cline{2-2} \cline{4-4}& $10$ &  & $11$ \\\hline
            \hline
            \multirow{2}{*}{$\ket{10}$} & $01$ & \multirow{2}{*}{$\ket{11}$} & $00$ \\
            \cline{2-2} \cline{4-4}& $11$ &  & $10$ \\\hline
        \end{tabular}
\end{table}

% Table 3
\begin{table}[!h]
	\centering
	\caption{Comparisons among typical SQKD and SQDC protocols. }\label{tableCompare}
    \begin{tabular}{ || p{13em} | p{4em} |  p{3em}| p{5em}| p{5em}| l||}
    \hline
    Protocols (The protocols using entanglements are marked by *)& Qubit permutation & Basis Switch & Qubit Efficiency~$\eta$& Minimum \# quregisters&\gls*{ept} \\\hline\hline
    Boyer(2009) Randomization-Based SQKD \citep{PhysRevA.79.032341}	& Yes	& Yes	& $<12.5\%$	& $4n$	& $0$\\ \hline
    Boyer(2009) Measure-Resend SQKD \citep{PhysRevA.79.032341}		& No		& Yes	& $<12.5\%$	& $0$	& $0$\\ \hline
    Zou (2009) Protocol 5 \citep{PhysRevA.79.052312}					& No		& Yes	& $<12.5\%$	& $0$	& $0$\\ \hline
    Wang (2011) \citep{0256-307X-28-10-100301} *						& Yes	& Yes	& $<50\%$	& $6n$	& Worse than linear\\ \hline
    Li (2016) \citep{li2016} 										& No		& Yes 	& $<6.25\%$	& $0$	& $0$\\ \hline
    Luo (2016) \citep{Luo2016}	*									& Yes	& Yes	& $12.5\%$	& $20n$	& Worse than linear\\ \hline
    \gls*{nk} *														& No		& No		& $\approx 100\%$ 	& $1$	& $C+2T$\\ \hline
    \gls*{renk} *													& No		& No		& $\approx 99\%$		& $1$	& $C+2T$\\ \hline
    \end{tabular}
\end{table}

\end{document}